\documentclass[11pt]{article}

 \usepackage[margin=1in]{geometry}
 \usepackage{tabularx}

 \usepackage[usenames, dvipsnames]{xcolor}
 \usepackage[colorlinks=true, citecolor=ForestGreen, linkcolor=ForestGreen, urlcolor=NavyBlue]{hyperref}

\usepackage{amsthm,amssymb,amsmath}
\usepackage{bbding}
\usepackage[nointegrals]{wasysym}
\usepackage{xspace}
\usepackage{mathtools}
\usepackage{url}
\usepackage{subcaption}
\usepackage{tikz}
\usetikzlibrary{arrows, fit, positioning}
\usetikzlibrary{backgrounds,automata,calc}
\usetikzlibrary{decorations.pathreplacing}
\usepackage{csquotes}
\usepackage{amsfonts}
\usepackage{bm,bbm}
\usepackage{pgfplots}
\pgfplotsset{compat=1.17}

\usepackage{enumerate}
\usepackage{paralist}
\usepackage[shortlabels]{enumitem}
\usepackage{appendix}
\usepackage{thmtools}
\usepackage{mathtools}
\usepackage{natbib}
\usepackage{changepage}
 \usepackage{authblk}
\usepackage{commath}
\usepackage[ruled,vlined]{algorithm2e}

\usepackage[nameinlink, capitalise]{cleveref}

\usepackage{mathtools}

\setlist{topsep=0.5ex,itemsep=0.1ex}

\newcommand{\vare}{\varepsilon}
\renewcommand{\deg}{\texttt{deg}}

\newtheorem{openquestion}{Open Question}
\newtheorem{theorem}{Theorem}[section]
\newtheorem{lemma}[theorem]{Lemma}

\newtheorem{claim}[theorem]{Claim}

\newtheorem{corr}[theorem]{Corollary}

\theoremstyle{definition}

\usepackage{color-edits}

\addauthor{VV}{red}
\addauthor{RS}{blue}

\title{Some Improved Results on Fair and Balanced Graph Partitions}

\author{Vignesh Viswanathan}
\affil{University of Massachusetts Amherst \\
  \texttt{{vviswanathan@umass.edu}}}
\date{}
\begin{document}

\maketitle

\begin{abstract}
We consider the problem of partitioning an undirected graph (representing a social network) over $n$ nodes and max degree $\Delta$ into $k$ equally sized parts. Each node in the graph, representing an agent, derives utility proportional to the number of their neighbors in their assigned part. Our goal is to find a balanced partitioning that is {\em fair}. The two notions of fairness we consider are the core and envy-freeness. A partition is envy-free if no node gains utility from moving to a different part, and a partition is in the core if no set of $n/k$ nodes can deviate to form a new part with all nodes gaining in utility.

We show that there exists a balanced partition which is both $O(\max\{\sqrt{\Delta}, k^2\} \ln n)$-approximately envy-free and in the $(k + o(k))$-approximate core. Taken separately, these two guarantees are comparable to (and in some cases, better than) the best known envy-freeness and core guarantees for this problem. Moreover, we show that these desirable partitions can be computed efficiently if we slightly relax the balancedness constraint.
In addition, when $k = 2$, we show that a $(1.618 + o(1))$-core exists, and a $(2 + \vare)$-core can be computed in polynomial time. The last two results make progress on two open questions from \cite{Li2023GraphPartitioning}.  
\end{abstract}

\section{Introduction}
We have a set of $n$ agents that need to be divided into groups of equal sizes. This simple statement models several natural problems such as assigning tables to participants at a large banquet event \citep{Deligkas2025Balanced}, or dividing a set of students into teams for a friendly sporting competition \citep{Agarwal2025Harmonious}. The trivial solution to this problem is to arbitrarily create a partition of agents into groups. However, central planners often have an incentive to take interpersonal relationships into account when constructing these partitions, primarily to ensure all the agents have a good experience. This raises the important question of how one can do so while being fair to the agents involved.

There are several models which capture this problem; the specific model we consider was introduced recently by \cite{Li2023GraphPartitioning}. In this natural model, friendships between agents are captured by an undirected graph over the set of agents. An edge exists between two agents if they are friends; friendships are binary and symmetric. The goal is to divide the agents into $k$ groups of equal size (or almost equal size) such that the partition is fair to the agents involved. 

The fairness of a partition is measured using two criteria: envy-freeness and the core. A partition is said to be envy-free if no agent is willing to swap their assigned group with any other agent. A partition is in the core if there is no set of $\lfloor n/k \rfloor$ agents who all prefer being grouped with each other than their assigned groups. 

\cite{Li2023GraphPartitioning} and subsequent follow-up work \citep{Agarwal2025Harmonious} present several algorithms to compute partitions which are either approximately envy-free or in the approximate core. However, none of these papers consider achieving these two desirable criteria simultaneously\footnote{The one exception is the result of \cite{Agarwal2025Harmonious} which shows that both criteria can be achieved for the restricted class of grid graphs and subgraphs of grid graphs.}. In this paper, we address this gap by answering the following question:
\begin{center}
    {\em Are there any (approximately) envy-free partitions in the (approximate) core?}
\end{center}
\subsection{Our Contribution}
Before we present our main result, we set up some basic notation. The utility of an agent $i$ in a partition $X$ (denoted $u_i(X)$) is defined as the number of its friends (or neighbors) in its assigned group. A partition is {\em balanced} if all parts have the same size (up to a difference of one). A partition is $r$-approximately envy-free if each agent gains a utility of at most $r$ by swapping its group assignment with another agent. A partition $X$ is in the $(\alpha, \beta)$-core if there is no set of agents $S$ of size $\lfloor n/k \rfloor$ such that each agent $i$ in $S$ has $u_i(S) > \alpha u_i(X) + \beta$.\footnote{This definition is informal. A precise technical definition is provided in Section \ref{sec:prelims}.} Our main result is the following:
\begin{theorem}
    There exists an $O(\max\{\sqrt{\Delta}, k^2\} \ln n)$-approximately envy-free balanced partition of the agents in the $(k + \sqrt{k}, O(k^{5/2} \ln n))$-core.
\end{theorem}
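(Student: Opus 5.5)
Here is the plan. I would start from a partition that maximizes social welfare (number of internal edges) among balanced partitions, or more precisely a local optimum under swaps, and then randomly refine. The core guarantee comes from a potential/welfare argument, and the envy-freeness comes from random rounding.

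\textbf{Step 1: a coarse welfare-maximizing structure.} Take a balanced partition $X$ that maximizes the number of internal edges; in particular no swap of two agents increases welfare. For a deviating coalition $S$ of size $\lfloor n/k\rfloor$ with $u_i(S) > \alpha u_i(X)+\beta$ for all $i\in S$, I would compare $S$ with the parts of $X$. Some part $X_j$ contains at least $|S|/k$ agents of $S$ by pigeonhole. The idea is to argue that replacing the parts of $X$ by a reshuffling that makes $S$ (or a large piece of it) a part, and redistributing the displaced agents, increases welfare unless $\sum_{i\in S}u_i(S)\le k\sum_{i\in S}u_i(X)+(\text{lower order})$. This is where the factor $k$ arises. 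Averaging over the $k$ possible ways to place $S$ only loses edges from $S$ to the displaced agents, which gives the $(k+\sqrt{k})$ multiplicative term after balancing an averaging parameter. Since $\alpha u_i(X)$ exceeds an average, some agent of $S$ must violate it.

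\textbf{Step 2: random refinement for envy-freeness.} Welfare maximization alone does not give envy-freeness, since a single agent may have many neighbors elsewhere. So instead I would take a fractional or coarse solution and then assign agents to parts by a random balanced assignment, such as a random permutation within groups, with the coarse structure fixing expected utilities. Chernoff bounds then show each agent's number of neighbors in each part deviates from expectation by $O(\sqrt{\Delta\ln n})$. Small expected counts need care, and the $k^2\ln n$ additive term handles the case where expectations are tiny and concentration is weak, via a union bound over $n k$ (agent, part) pairs. This gives envy at most $O(\max\{\sqrt\Delta,k^2\}\ln n)$. The same concentration shifts core utilities by additive errors, which accumulate into the $O(k^{5/2}\ln n)$ additive term. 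Here the $\sqrt{k}$ factors come from the Step~1 averaging slack.

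\textbf{Main obstacle.} The hard step is making the two guarantees compatible. The randomization must preserve the welfare-based core argument, so I would randomize only within a structure whose expected utilities already certify the core, for instance splitting each part of $X$ evenly into $k$ pieces and recombining. Verifying that the expected-utility core inequality survives rounding with only an additive $O(k^{5/2}\ln n)$ loss requires a careful Lov\'asz Local Lemma or union-bound argument. I expect this to be the delicate calculation. If it fails, I would fall back to a swap-based local search whose potential combines welfare with an envy penalty.
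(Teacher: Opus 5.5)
There is a genuine gap, and it sits at the step you flag as the main obstacle. The root cause is that you build the core guarantee on welfare maximization. Step~1 is only asserted. You never show that a welfare-maximizing balanced $k$-partition is in an approximately $k$-core; the paper notes that the best known core-only guarantee for general $k$ is $(2k-1,1)$. Nor do you show where $k+\sqrt{k}$ would come from. Worse, Steps~1 and~2 pull against each other. The refinement you propose (split each part of $X$ into $k$ pieces and recombine) sends about a $1/k$ fraction of each agent's neighbors in every old part into every new part. So each agent's utility concentrates at $\deg(i)/k$ whatever $X$ was, and all welfare structure is erased. Permuting agents within the parts of $X$ changes nothing and gives no envy bound. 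So no core inequality inherited from Step~1 survives the rounding, and your fallback is unspecified.

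The missing idea is that the welfare step is unnecessary: per-part concentration certifies the core by itself, because no coalition gives agent $i$ more than $\deg(i)$. Suppose $u_i(X)\ge \deg(i)/k - 5\sqrt{\deg(i)\ln n}$. Then, using $5(k+\sqrt{k})\le 10k$,
\[ (k+\sqrt{k})\,u_i(X) + 25k^{5/2}\ln n \;\ge\; \deg(i) + \Big(\frac{\sqrt{\deg(i)}}{k^{1/4}} - 5k^{5/4}\sqrt{\ln n}\Big)^2 \;\ge\; \deg(i) \;\ge\; u_i(S). \]
Agents with $\deg(i) < 25k^{5/2}\ln n$ cannot block at all. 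So the $\sqrt{k}$ is the multiplicative slack that absorbs the $\Theta(\sqrt{\deg(i)\ln n})$ deviation. That deviation grows with degree, so it cannot be charged to a $\Delta$-free additive term as your Step~2 suggests. The term $k^{5/2}\ln n$ is exactly what makes the quadratic in $\sqrt{\deg(i)}$ a perfect square.

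The paper then needs only a balanced partition in which every agent with $\deg(i)\ge 18k^2\ln n$ has $|u_i(X_j)-\deg(i)/k|\le 5\sqrt{\deg(i)\ln n}$ for all $j$. A uniform random assignment is balanced with probability at least $(2ne/k)^{-2k}$. The threshold $18k^2\ln n$ makes each Chernoff failure at most $\frac{1}{2nk}(2ne/k)^{-2k}$, so a union bound leaves positive probability for both events. This is where the $k^2$ comes from: beating the $e^{-\Theta(k\ln n)}$ probability of exact balance, not a plain union bound over $nk$ pairs. Your Step~2 alone would also work once Step~1 is dropped, using a uniformly random exactly balanced partition and hypergeometric tail bounds.
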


Usually, when multiple fairness objectives are achieved simultaneously, there is some loss in the approximation factor of each fairness objective as opposed to the best approximation factor achievable when only considering one such fairness notion. However, this is not the case with our result; in fact, there are several cases where our approximation factors surpass the previous best approximations for this problem. 

For the problem of finding envy-free balanced partitions (without considering the core), the best known approximation factor is $O(\sqrt{\frac{n}{k} \ln k})$. In comparison, our approximation factor of $O(\max\{\sqrt{\Delta}, k^2\} \ln n)$ is significantly better when $\Delta$ (the max degree of the graph) and $k$ are smaller than $n$. 

Indeed, graphs where $\Delta \ll n$ are natural in the realm of social networks. Recent work \citep{Agarwal2025Harmonious} motivated by this observation improves the envy-freeness approximation for the special case of $k = 2$ to $\Delta - 2$. Our result improves upon this result as well: when $\Delta = \omega(\ln^2 n)$, our approximation factor is better than $\Delta - 2$. 

When it comes to the core (without considering envy-freeness), the best known approximation guarantee for general $k$ is $(2k-1, 1)$. Our result achieves an approximation of $(k + \sqrt{k}, O(k^{5/2} \ln n))$. Essentially, we trade off some of the multiplicative approximation factor for an additive one. This is particularly useful when the partition is able to offer a high utility to most agents; in such cases, the additive approximation factor becomes insignificant compared to the multiplicative one. 

Finally, we briefly highlight the significance of achieving both these guarantees simultaneously. The difficulty in proving such a result is best captured by the following quote from \cite{Li2023GraphPartitioning}:
\begin{center}
    {\em there are instances in which every partition in the core achieves the worst possible approximation with respect to envy-freeness, and every partition that achieves the best approximations with respect to envy-freeness, achieves the worst possible approximation with respect to the core.}
\end{center}
Our result shows that if we slightly compromise on the best possible approximation factor of one of the two criteria, we can achieve reasonable approximations for {\em both} the envy-freeness and the core. 

One drawback of our result is that it does not imply an efficient algorithm to find this fair partition. We show that slightly relaxing the balancedness constraint so that each part (or group) must only have size in the interval $\left [(1- \vare)\frac{n}{k}, (1+\vare)\frac{n}{k} \right ]$ for some $\vare > 0$, both allows for efficient computation as well as improves the approximation factors.

\begin{theorem}\label{thm:into-lll}
For every $\vare \in \left [6\sqrt{\frac{k\ln{k}}{n}}, 1 \right ]$, there exists an $O(\max\{\sqrt{\frac{\Delta}{k}}, k\} \ln{\frac{\Delta k}{\vare}} )$-approximately envy-free $\vare$-balanced partition of the agents in the $(k + \sqrt{k}, O(k^{3/2}  \ln{\frac{\Delta k}{\vare}}))$-core. Moreover, this partition can be computed efficiently. 
\end{theorem}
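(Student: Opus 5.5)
The plan is to assign each agent independently and uniformly at random to one of the $k$ parts and use the algorithmic Lov\'asz Local Lemma (Moser--Tardos) to avoid a family of local bad events. The name of the theorem label (\texttt{thm:into-lll}) points in this direction. Each bad event will depend only on the parts assigned to a vertex $i$ and its neighbors. Hence each event is mutually independent of all events except those within distance $2$, so the dependency degree is $d \le \Delta^2+1$ (or $O(\Delta^2)$ counting vertex-level events). Because the number of events per dependency neighbourhood is polynomial in $\Delta k/\vare$, a failure probability of $(\Delta k/\vare)^{-c}$ per event suffices, which produces the $\ln\frac{\Delta k}{\vare}$ factors.

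\textbf{Local events.} For each agent $i$ and each part $j$, let $d_{ij}$ be the number of neighbours of $i$ in part $j$; then $\mathbb{E}[d_{ij}]=\deg(i)/k$. By Chernoff, with $L=\ln\frac{\Delta k}{\vare}$, we get $|d_{ij}-\deg(i)/k| \le O(\sqrt{\deg(i)L/k}+L)$ except with probability $(\Delta k/\vare)^{-c}$. A union bound over the $k$ parts is absorbed into the constant.

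\emph{Envy-freeness.} If all these events hold, then moving or swapping $i$ gains at most $\max_j d_{ij}-d_{i,X(i)}+1 = O(\sqrt{\Delta L/k}+L)$, which is within the claimed $O(\max\{\sqrt{\Delta/k},k\}L)$.

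\emph{Core.} All $d_{ij}$ are concentrated, so $u_i(X)\ge \deg(i)/k - O(\sqrt{\deg(i)L/k}+L)$. Any deviating set $S$ gives $u_i(S)\le\deg(i)$. It therefore suffices to show
\[
\deg(i) \le (k+\sqrt{k})\,u_i(X) + O(k^{3/2}L).
\]
Write $D=\deg(i)$. Then $(k+\sqrt k)(D/k - c\sqrt{DL/k}-cL) = D + D/\sqrt k - c(k+\sqrt k)\sqrt{DL/k} - O(kL)$. The term $D/\sqrt k$ dominates $c(k+\sqrt k)\sqrt{DL/k}\approx c\sqrt{kDL}$ as soon as $D\gtrsim k^2L$. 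When $D$ is smaller, the negative terms are $O(k^{3/2}L)$. So the inequality holds in both cases, and the core guarantee does not even need the set structure of $S$; it holds for every $S$, since $u_i(S)\le\deg(i)$ for any set.

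\textbf{Balancedness.} Balancedness is the global constraint and is the main obstacle, because part sizes are not local events. I would handle it separately: the size of each part is a sum of $n$ independent indicators with mean $n/k$. Chernoff gives deviation beyond $\vare n/k$ with probability at most $2\exp(-\vare^2 n/(3k))\le 2k^{-12}$ using $\vare\ge 6\sqrt{k\ln k/n}$. A union bound over the $k$ parts keeps this small. To combine it with the LLL, I would invoke the known fact that the Moser--Tardos output distribution is close to the product distribution conditioned on avoiding all events. Concretely, for any event $E$, the probability of $E$ under the output is at most $\Pr[E]\cdot\prod(1-x_A)^{-1}$ over the events $A$ dependent on $E$. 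This bound does not directly apply, since the size event depends on all variables. Instead I would use the fact that each resampling changes part sizes by at most the neighbourhood size, together with the $O(n)$ expected resamplings. Alternatively, and more simply, I would first fix an $\vare/2$-balanced partition structure: randomly partition into groups and apply a bounded-difference (McDiarmid) argument on the LLL output.

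If this coupling is too delicate, the fallback is to post-process. Move at most $\vare n/(2k)$ surplus agents from overfull parts to underfull ones, and argue that the randomness of the chosen movers keeps each vertex's $d_{ij}$ perturbed by only $O(L)$. This would be secured by a second LLL round or by moving only vertices whose neighbours are not moved. Efficiency then follows from Moser--Tardos, which runs in expected polynomial time, together with the polynomial-time post-processing.
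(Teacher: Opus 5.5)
\textbf{Gap: you never prove the output is $\vare$-balanced.} Your envy-freeness and core calculations are fine; they match the paper's analysis of $\vare$-desirable partitions, and your Bernstein-type bound for all degrees even gives a slightly sharper envy-freeness term. The gap is where you yourself locate the obstacle: none of your balancedness options is carried through, and the concrete ones fail as stated.
\begin{itemize}
\item The claim that $O(n)$ resamplings, each moving at most $\Delta+1$ vertices, keep sizes balanced only bounds the drift by $O(n\Delta)$. That is far above the allowed $\vare n/k$.
\item The McDiarmid suggestion has no justification, because the Moser--Tardos output is not a product distribution.
\item The post-processing step is unanalysed. Moving $\Theta(\vare n/k)$ vertices can shift a single $d_{ij}$ by up to $\Delta$, and ``a second LLL round'' just moves the problem elsewhere.
\item You dismissed the Haeupler--Saha--Srinivasan bound too quickly. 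It does apply to an event depending on all variables. The product $\prod_A (1-x_A)^{-1}$ over the $n$ node events is only $e^{O(nx)}$, which is harmless when the node weights $x$ are a sufficiently small multiple of $\vare^2/k$.
\end{itemize}

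\textbf{The missing idea: put the imbalance event into the asymmetric LLL.} The paper treats the imbalance event $A_G$ as a single LLL event adjacent to every node event. It sets $x_G = 1/2$ and $x_i = 1/(k\Delta^2/\vare^2 + 1)$. Then
\[
x_G\prod_i (1-x_i) \ge \tfrac12 e^{-n\vare^2/(k\Delta^2)} \ge 2k e^{-\vare^2 n/(3k)} \ge \Pr[A_G],
\]
and the middle inequality holds exactly when $\vare \ge 6\sqrt{k\ln k/n}$ (using $\Delta \ge 2$). Each node event then needs failure probability at most $\vare^2/(8k\Delta^2)$. This is what forces the deviation $4\sqrt{(\deg(i)/k)\ln(\Delta k/\vare)}$.

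This coupling is the real source of $\vare$ inside the logarithm. Your explanation, that dependency neighbourhoods contain polynomially many events in $\Delta k/\vare$, is wrong: they contain $O(\Delta^2)$ events regardless of $\vare$. A purely local LLL would give only $\ln(\Delta k)$.

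For efficiency, Moser--Tardos allows an event whose variable set is all of $V$; resampling it means resampling the whole partition. So the algorithm still makes an expected $\sum_A x_A/(1-x_A) = O(n)$ resamplings. Since $A_G$ is one of the avoided events, it always returns an $\vare$-balanced partition.
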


Our main technique to prove both of these results is the probabilistic method. Specifically, \Cref{thm:into-lll} uses the algorithmic Lovasz Local Lemma \citep{MoserTardos2010} to compute partitions. The application of these techniques to our specific problem is novel; our use is inspired by the use of similar techniques in other graph partitioning problems (see for example \citep{pemmaraju_srinivasan2008randomized}).

\paragraph{The Special Case of $k = 2$} For the case of $k = 2$, \cite{Li2023GraphPartitioning} present two open questions that we make some progress on. 

\begin{openquestion}\label{oq:one}
When $k = 2$, does every graph admit a balanced partition in the $(1, 0)$-core?
\end{openquestion}

\begin{openquestion}\label{oq:two}
When $k = 2$, can a balanced partition in the $(2, 0)$-core be computed in polynomial time?
\end{openquestion}

On the first question, we show that a $(1.618 + o(1), 0)$-core always exists, improving on the result of \cite{Li2023GraphPartitioning} who show that a $(2, 0)$-core always exists. On the second question, we show that a $(2 + \vare)$-approximation to the core can be computed in polynomial time for any $\vare > 0$. This result almost resolves Open Question \ref{oq:two}, barring a small $\vare$ factor.

\subsection{Related Work}
Our specific model of dividing agents into coalitions with identical size has been studied in the following three papers \citep{Li2023GraphPartitioning,Agarwal2025Harmonious,Deligkas2025Balanced}. \cite{Li2023GraphPartitioning} introduce the problem and present both existential and algorithmic results on the core and envy-freeness. \cite{Agarwal2025Harmonious} improve on the results of \cite{Li2023GraphPartitioning} for specific classes of graphs like bounded-degree graphs and grid graphs. They also introduce and study the notion of Pareto optimality for this problem. \cite{Deligkas2025Balanced} adapt several notions of fairness from the fair division community to this specific problem, and analyze their existence and computational complexity. 

There are also several works in the hedonic games community which study the partitioning of agents into groups with some constraints on their size. Examples of this include \cite{MonacoMoscardelli2023NashBoundedFHG} who study Nash stability (similar to envy-freeness), and \cite{CohenAgmon2024OnlineFriendsPartitioning} who study the problem in an online setting; in both papers, there is an upper bound placed on group size. \cite{BullingerDunajskiElkindGilboa2025SingleDeviationStability} study Nash stability and a few variants when there are both lower bounds and upper bounds on group size. \cite{BiloMonacoMoscardelli2022HedonicFixedSize} study hedonic games when each coalition must have a specific size (not necessarily identical) that is given as part of the input; in this model, they study swap stability where a partition must be robust to two agents performing a mutually beneficial swap in group assignment. 

An alternate line of work which studies coalition formation of fixed size is the stable roommates problem \citep{Irving1985StableRoommates,McKay2021Stable}. The main focus in this line of work has been finding stable coalitions of small size ($\le 3$). Our problem, particularly the problem of finding the core, can be viewed as a generalization of this problem with symmetric binary preferences to coalitions of larger size.

Our problem is also related to the problem of balanced graph partitioning, where an undirected graph must be divided into $k$ parts of equal size such that the size of the cut is minimized \citep{AndreevRaecke06BalancedPartitioning,KrauthgamerNaorSchwartz09,BansalEtAl11}. However, most of the literature in this problem focuses on minimizing the global cut size rather than ensuring fairness for the nodes. A few papers study the problem of finding {\em satisfactory} partitions, which are partitions that guarantee each node a certain number of neighbors on its side of the cut \citep{BazganTuzaVanderpooten06,Ciccarelli22}. However, the focus here is on deciding if such a partition exists rather than proving their universal existence.

\section{Preliminaries}\label{sec:prelims}
We have an $n$-node undirected graph $G = (V, E)$ with max degree $\Delta$. For a node $i$ and a set $S \subseteq V$, we use $E(i, S)$ to denote the number of edges from $i$ to the set of nodes $S$ in the graph $G$. Given two sets $S, T \subseteq V$, we use $E(S, T)$ to denote $\sum_{i \in S} E(i, T)$. We use $\deg(i)$ to denote the degree of node $i$ in the graph; i.e., the value $E(i, V)$.

We use $X = (X_1, \dots, X_k)$ to denote a $k$-partition of the vertex set $V$. For some $\vare > 0$, we say that a partition $X$ is $\vare$-{\em balanced} if for all $i \in [k]$, $|X_i| \in \left [ \left \lfloor (1-\vare)\frac{n}{k} \right \rfloor, \left \lceil (1+\vare)\frac{n}{k} \right \rceil \right ]$.
When $\vare = 0$, we simply call the partition $X$ balanced.

Given a partition $X$ and a node $i$, we define the {\em utility} of the node $i$, denoted $u_i(X)$, as the number of edges $i$ has to nodes in its own part. That is, $u_i(X) = E(i, X_j)$ where $i \in X_j$. More generally, given a set $S$, we define the utility of $i$ for the set $S$ (denoted $u_i(S)$) as the value $E(i, S)$.

\subsection{Core}
Given a $k$-partition $X$, a set of nodes $S \subseteq V$ is said to be an $(\alpha, \beta)$-blocking set if $u_i(S) > \alpha \cdot u_i(X) + \beta$ for all $i \in S$. 

A balanced partition is said to be in the $(\alpha, \beta)$-core if it has no $(\alpha, \beta)$-blocking sets of size either $\left \lfloor n/k \right \rfloor$ or $\left \lceil n/k \right \rceil$. When considering $\vare$-balanced partitions, we allow for blocking coalitions to also be $\vare$-balanced. Specifically, an $\vare$-balanced partition is in the $(\alpha, \beta)$-core if it has no $(\alpha, \beta)$-blocking sets of size in the interval $\left [ \left \lfloor (1-\vare)\frac{n}{k} \right \rfloor, \left \lceil (1+\vare)\frac{n}{k} \right \rceil \right ]$.
\subsection{Envy-Freeness}

Given a $k$-partition $X$, a node $i$ is said to be $r$-envy-free (or EF-$r$) if for all $j \in [k]$, $u_i(X_j) - u_i(X) \le r$. A $k$-partition $X$ is said to be $r$-envy-free if all nodes $v \in V$ are $r$-envy-free.

We note that our definition of envy-freeness is slightly different from \cite{Li2023GraphPartitioning} who define it based on the utility a node would gain if it {\em swapped} its assigned part with another node. Our definition is stronger in the sense that any EF-$r$ partition according to our definition is trivially EF-$r$ according to the definition of \cite{Li2023GraphPartitioning}. We choose our definition since its simplicity makes proofs easier to understand. 

\subsection{Chernoff Bounds}
Most of our proofs use the Chernoff bound to upper bound the probability of `bad' events. The following version is identical to \cite[Corollary 4.6]{MitzenmacherUpfal2017}.

\begin{theorem}\label{thm:chernoff}(Two-sided Chernoff Bound)
Let $X_1, \dots, X_n$ be independent Bernoulli random variables with
$\Pr[X_i = 1] = p_i$, and let
\[
X = \sum_{i=1}^n X_i, \qquad \mu = \mathbb{E}[X] = \sum_{i=1}^n p_i.
\]
Then for any $0 < \delta < 1$,
\[
\Pr\bigl[\,|X - \mu| \ge \delta \mu\,\bigr]
\;\le\;
2 \exp\!\left(-\frac{\delta^2 \mu}{3}\right).
\]
\end{theorem}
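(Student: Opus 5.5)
This statement is the standard two-sided Chernoff bound, quoted verbatim from \cite[Corollary 4.6]{MitzenmacherUpfal2017}. The plan is to prove the upper and lower tails separately with the exponential moment method and then take a union bound.

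For the upper tail, fix $t > 0$ and apply Markov's inequality to $e^{tX}$. This gives
\[
\Pr[X \ge (1+\delta)\mu] \le e^{-t(1+\delta)\mu}\,\mathbb{E}[e^{tX}].
\]
By independence, $\mathbb{E}[e^{tX}] = \prod_i (1 + p_i(e^t - 1))$. Using $1 + x \le e^x$ gives
\[
\mathbb{E}[e^{tX}] \le e^{(e^t-1)\mu}.
\]
Choosing $t = \ln(1+\delta)$ yields the bound
\[
\Pr[X \ge (1+\delta)\mu] \le \left(\frac{e^{\delta}}{(1+\delta)^{1+\delta}}\right)^{\mu}.
\]
To convert this to $e^{-\delta^2\mu/3}$, define
\[
f(\delta) = \delta - (1+\delta)\ln(1+\delta) + \delta^2/3.
\]
We have $f(0) = 0$. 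Its derivative is $f'(\delta) = -\ln(1+\delta) + 2\delta/3$, so $f'(0) = 0$. The second derivative is $f''(\delta) = -1/(1+\delta) + 2/3$. This is negative on $[0, 1/2)$ and positive on $(1/2, 1]$. Hence $f'$ first decreases and then increases, and $f'(1) = 2/3 - \ln 2 < 0$, so $f' \le 0$ on $[0,1]$. Therefore $f \le 0$ on $[0,1]$, which gives the upper tail bound $e^{-\delta^2\mu/3}$.

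For the lower tail, apply the same argument to $e^{-tX}$ with $t = -\ln(1-\delta)$. This gives
\[
\Pr[X \le (1-\delta)\mu] \le \left(\frac{e^{-\delta}}{(1-\delta)^{1-\delta}}\right)^{\mu}.
\]
Set
\[
g(\delta) = -\delta - (1-\delta)\ln(1-\delta) + \delta^2/2.
\]
Then $g(0) = 0$ and $g'(\delta) = \ln(1-\delta) + \delta \le 0$, so $g \le 0$. This gives the bound $e^{-\delta^2\mu/2} \le e^{-\delta^2\mu/3}$. Summing the two tails gives the factor $2$.

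The only non-routine step is the calculus inequality for the upper tail. This is where the constant $3$ comes from and where the restriction $\delta < 1$ is used. Since the result is quoted from a standard textbook, in the paper itself one would simply cite it rather than reprove it.
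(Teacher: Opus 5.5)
Your proof is correct: the Markov/MGF bound for each tail, the calculus showing $f\le 0$ on $[0,1]$ (via $f'(1)=2/3-\ln 2<0$) and $g\le 0$ on $[0,1)$, and the final union bound are all sound. The paper does not prove this statement; it only cites \cite[Corollary 4.6]{MitzenmacherUpfal2017}, and the standard textbook derivation behind that corollary is exactly the argument you wrote.
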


\section{Existence of (Approximately) Envy-Free Partitions in the (Approximate) Core}\label{sec:approx-existence}
In this section, we show that there is a balanced partition which is $(18\max\{\sqrt{\Delta}, k^2\} \ln n)$-envy-free and in the $(k + \sqrt{k}, 25k^{5/2} \ln n)$-core.

Our proof has the following structure: we identify some desirable properties that if a partition satisfies, then it must also be approximately envy-free and in the approximate core. We then show that there exists some partition that satisfies the required desirable properties. Specifically, we refer to a $k$-partition $X$ as {\em desirable} if the following conditions hold:
\begin{enumerate}[(i)]
    \item $X$ is balanced, and
    \item For all nodes $i$ with $deg(i) \ge 18k^2 
    \ln{n}$ and all parts $j \in [k]$,  $$u_i(X_j) \in \left [\frac{\deg(i)}{k} - 5\sqrt{\deg(i) \ln n}, \frac{\deg(i)}{k} + 5\sqrt{\deg(i) \ln n} \right ].$$
\end{enumerate}

Intuitively, in desirable partitions, all high degree nodes are in some sense indifferent between the parts. Using this intuition, we show that any desirable partition satisfies the guarantees we seek to prove.

\begin{theorem}\label{thm:desirable-properties}
For any graph $G$, desirable partitions are $(18\max\{\sqrt{\Delta}, k^2\} \ln n)$-envy-free and in the $(k + \sqrt{k}, 25k^{5/2} \ln n)$-core.
\end{theorem}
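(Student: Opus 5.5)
The plan is to treat the two guarantees separately. Both will follow directly from condition (ii) of desirability, together with a split of the nodes into low-degree nodes ($\deg(i) < 18k^2\ln n$) and high-degree nodes. Balancedness (condition (i)) is only needed because the core is defined for balanced partitions.

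\emph{Envy-freeness.} First take a low-degree node $i$. For every part $j$ we have $u_i(X_j) - u_i(X) \le u_i(X_j) \le \deg(i) < 18k^2\ln n$.

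Next take a high-degree node $i$, and suppose it lies in part $X_{j'}$. Condition (ii) applies both to $X_{j'}$ and to any other part $X_j$. Hence
\[
u_i(X_j) - u_i(X) \le 10\sqrt{\deg(i)\ln n} \le 10\sqrt{\Delta \ln n} \le 18\sqrt{\Delta}\ln n ,
\]
where the last step uses $\ln n \ge 1$ (we may assume $n\ge 3$). In both cases the envy is at most $18\max\{\sqrt{\Delta},k^2\}\ln n$.

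\emph{Core.} Suppose for contradiction that $S$ is a $(k+\sqrt{k},\beta)$-blocking set of size $\lfloor n/k\rfloor$ or $\lceil n/k\rceil$, where $\beta = 25k^{5/2}\ln n$. Pick any $i\in S$.

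If $i$ is low-degree, then $u_i(S)\le \deg(i) < 18k^2\ln n \le \beta$. Since $u_i(X) \ge 0$, this contradicts $u_i(S) > (k+\sqrt{k})u_i(X) + \beta$. So every node of $S$ is high-degree.

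Now fix $i \in S$ and write $d=\deg(i)$. Condition (ii), applied to $i$'s own part, gives $u_i(X)\ge d/k - 5\sqrt{d\ln n}$. Using $u_i(S)\le d$, the blocking condition then yields
\[
d > (k+\sqrt{k})\Bigl(\frac{d}{k} - 5\sqrt{d\ln n}\Bigr) + \beta = d + \frac{d}{\sqrt{k}} - 5(k+\sqrt{k})\sqrt{d\ln n} + \beta .
\]
We bound $5(k+\sqrt{k})\le 10k$; this is the right direction, since that term is subtracted. Setting $t=\sqrt{d}$, we obtain
\[
\frac{t^2}{\sqrt{k}} - 10k\sqrt{\ln n}\, t + 25k^{5/2}\ln n < 0 .
\]
The key observation is that the left-hand side is a perfect square:
\[
\frac{t^2}{\sqrt{k}} - 10k\sqrt{\ln n}\, t + 25k^{5/2}\ln n = \Bigl(k^{-1/4}t - 5k^{3/4}\sqrt{\ln n}\Bigr)^2 \ge 0 .
\]
Its discriminant is $100k^2\ln n - 4\beta/\sqrt{k} = 0$, which is exactly why $\beta$ was chosen this way. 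This contradicts the strict inequality, so no blocking set exists.

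The argument never uses $|S|$ or any interaction between the nodes of $S$; it shows that no single node can satisfy the blocking inequality. The only delicate points are the orientation of the inequalities, namely using the lower bound of (ii) on $i$'s own part and the upper bound $u_i(S)\le d$, and checking that the constants align so that the quadratic has zero discriminant. I expect this constant-matching step to be the main place where care is needed.
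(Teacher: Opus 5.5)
Your proof is correct and follows the paper's argument essentially step for step. For envy-freeness it uses the low/high-degree split and applies condition (ii) on both sides; for the core it rules out low-degree members of $S$ and then shows that $\frac{d}{\sqrt{k}} - 10k\sqrt{d\ln n} + 25k^{5/2}\ln n$, a quadratic in $\sqrt{d}$, is nonnegative. One small slip: the explicit square should read $\bigl(k^{-1/4}t - 5k^{5/4}\sqrt{\ln n}\bigr)^2$, not $5k^{3/4}$, since your version expands to the wrong middle and constant terms. Your zero-discriminant check is correct, so the conclusion is unaffected.
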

\begin{proof}
The envy-freeness guarantee follows from the second condition of desirability. If any node $i$ has degree at most $18k^2 \ln n$, then $i$ is trivially $(18k^2 \ln n)$-envy-free. For all other nodes, the second condition of desirability implies that each node $i$ is $(10\sqrt{\Delta \ln n})$-envy-free. Both these bounds are at most $(18\max\{\sqrt{\Delta}, k^2\} \ln n)$.

To prove that the partition is approximately in the core, we use the fact that the utility of each node is upper bounded by its degree in the graph. Assume for contradiction that there exists a $(k + \sqrt{k}, 25k^{5/2} \ln n)$-blocking set $S$ to a desirable $k$-partition $X$. Consider some node $i \in S$. Note that $u_i(S) > (k+ \sqrt{k})u_i(X) + 25k^{5/2} \ln n$. For this to be true, node $i$ must have degree at least $25 k^{5/2} \ln n > 18k^2 \ln n$. Therefore, node $i$ satisfies the second condition of desirability. We establish a contradiction using the following sequence of inequalities.
\begin{align*}
    u_i(S) &> (k+ \sqrt{k})u_i(X) + 25k^{5/2} \ln n \\
    &\ge (k+ \sqrt{k}) \left [\frac{\deg(i)}{k} - 5\sqrt{\deg(i) \ln n} \right ] + 25k^{5/2} \ln n \\
    &\ge \deg(i) + \frac{\deg(i)}{\sqrt{k}} - 10k \sqrt{\deg(i) \ln n} + 25k^{5/2} \ln n \\
    &\ge \deg(i) \ge u_i(S).
\end{align*}
In the second inequality we used the second condition of desirable partitions, and in the third inequality we used the fact that the quadratic expression (in $\sqrt{\deg(i)}$), $\frac{\deg(i)}{\sqrt{k}} - 10k \sqrt{\deg(i) \ln n} + 25k^{5/2} \ln n$ is non-negative. This creates a contradiction, which implies that the blocking set $S$ cannot exist.
\end{proof}

We remark that in the above proof, we did not use the size constraint on the blocking set $S$; we only use the loose upper bound that the utility of any agent is upper-bounded by its degree. In the remainder of this section, we show that a desirable partition always exists.

\begin{theorem}
A desirable partition always exists.
\end{theorem}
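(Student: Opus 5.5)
We will prove existence by the probabilistic method, using a uniformly random balanced partition. Assume first that $k$ divides $n$; the general case is identical with parts of size $\lfloor n/k\rfloor$ or $\lceil n/k\rceil$. Let $X=(X_1,\dots,X_k)$ be chosen uniformly at random among all balanced $k$-partitions of $V$. Condition (i) then holds by construction, so we only need condition (ii) with positive probability. For each node $i$ with $\deg(i)\ge 18k^2\ln n$ and each part $j\in[k]$, let $B_{i,j}$ be the bad event that $u_i(X_j)$ lies outside $\left[\frac{\deg(i)}{k}-5\sqrt{\deg(i)\ln n},\,\frac{\deg(i)}{k}+5\sqrt{\deg(i)\ln n}\right]$. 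There are at most $nk\le n^2$ such events. It therefore suffices to show $\Pr[B_{i,j}]<1/n^2$, or even $\le 2n^{-8/3}$, and then take a union bound.

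Fix $i$ and $j$. The quantity $u_i(X_j)=E(i,X_j)$ counts how many of the $d=\deg(i)$ neighbours of $i$ land in $X_j$. This count is hypergeometric: we draw the set $X_j$ of size $n/k$ uniformly from $V$, and count how many of $d$ marked elements it contains. The distribution is essentially the same if $i$'s own part is conditioned upon, since that changes the population by at most one element. The mean is $\mu=d/k$, up to an $O(d/n)$ correction that the slack will absorb.

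The main obstacle is that Theorem~\ref{thm:chernoff} is stated only for independent Bernoulli variables, while the indicators here are negatively correlated. I would handle this with Hoeffding's classical result that sampling without replacement is dominated in convex order by sampling with replacement, so every Chernoff bound for the binomial $\mathrm{Bin}(d,1/k)$ also holds for the hypergeometric variable. (Negative association of the indicators $\mathbbm{1}[v\in X_j]$ would serve equally well.)

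We then apply Theorem~\ref{thm:chernoff} with $\mu=d/k$ and $\delta=5k\sqrt{\ln n/d}$, which gives $\delta\mu=5\sqrt{d\ln n}$. The assumption $d\ge 18k^2\ln n$ yields $\delta\le 5/\sqrt{18}<1$, so the theorem applies. This gives
\[
\Pr[B_{i,j}]\le 2\exp\!\left(-\frac{\delta^2\mu}{3}\right)=2\exp\!\left(-\frac{25k\ln n}{3}\right)\le 2n^{-50/3},
\]
using $k\ge 2$. The $O(1)$ shift in the mean caused by conditioning on $i$'s part is negligible against $5\sqrt{d\ln n}$: we can absorb it by using the constant $4.9$ in place of $5$. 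A union bound over the at most $nk$ events then gives total failure probability $o(1)<1$, so a desirable partition exists.
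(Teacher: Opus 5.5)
Your route is genuinely different from the paper's and it works, but one step is wrong as written. $5/\sqrt{18}\approx 1.18$, not less than $1$. So for $18k^2\ln n\le \deg(i)<25k^2\ln n$ your $\delta=5k\sqrt{\ln n/\deg(i)}$ exceeds $1$, and Theorem~\ref{thm:chernoff} does not apply.

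The repair is the trick in the paper's appendix. Take $\delta=k\sqrt{18\ln n/\deg(i)}\le 1$. This bounds the probability of a deviation of $\sqrt{18\deg(i)\ln n}<5\sqrt{\deg(i)\ln n}$ by $2e^{-6k\ln n}=2n^{-6k}$, which is still ample for the union bound. The leftover slack $(5-\sqrt{18})\sqrt{\deg(i)\ln n}>1$ absorbs the mean shift $\deg(i)\cdot\bigl|\,|X_j|-n/k\,\bigr|/n<1$ when $k\nmid n$.

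Conditioning on $i$'s own part is unnecessary. Node $i$ is not its own neighbour, and $X_j$ is marginally a uniform subset of its size. Your appeal to Hoeffding's comparison is legitimate, because Theorem~\ref{thm:chernoff} is derived through $\mathbb{E}[e^{tX}]$ and $x\mapsto e^{tx}$ is convex.

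The paper instead assigns nodes to parts independently, so the independent-Bernoulli Chernoff bound applies verbatim. It then handles balance separately. Stirling gives $\Pr[\textbf{X}\text{ balanced}]\ge (2ne/k)^{-2k}$, so every bad event must be pushed below $\frac{1}{2nk}(2ne/k)^{-2k}$. That requirement forces an exponent of order $k\ln n$, hence the $18k^2\ln n$ threshold and deviations of order $\sqrt{\deg(i)\ln n}$. It also leaves only an exponentially small success probability. The paper's approach has the merit of staying within the tools stated in its preliminaries.

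Your approach gets balance for free, so the union bound only has to beat $1$. A uniformly random balanced partition (a random permutation cut into blocks) is desirable with probability at least $1-2kn^{1-6k}$. This immediately gives a simple randomized polynomial-time algorithm for exactly balanced desirable partitions, which the paper explicitly leaves open in its conclusion.

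It also leaves room to sharpen the definition itself. A threshold of order $k\ln n$ with deviation $O(\sqrt{(\deg(i)/k)\ln n})$ already gives per-event failure $O(n^{-3})$. That would improve both fairness guarantees, to roughly the Section~\ref{sec:lovasz-local-lemma} bounds with $\ln n$ in place of $\ln\frac{\Delta k}{\vare}$, but with exact balance.
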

\begin{proof}
Consider a random partition $\textbf{X} = (\textbf{X}_1, \dots, \textbf{X}_k)$ where each node is assigned to a part independently and uniformly at random. For this partition to be balanced, $k'$ parts (for some $k'$) need to have size $\left \lceil \frac{n}{k} \right \rceil$ and $k - k'$ parts need to have size $\left \lfloor \frac{n}{k} \right \rfloor$. The probability of this happening is at least
\begin{align*}
    \frac{n!}{(\left \lceil \frac{n}{k} \right \rceil !)^{k'}(\left \lfloor \frac{n}{k} \right \rfloor !)^{k - k'}} \left (\frac{1}{k} \right)^n &= \frac{n!}{(\left \lceil \frac{n}{k} \right \rceil )^{k'}(\left \lfloor \frac{n}{k} \right \rfloor !)^{k}} \left (\frac{1}{k} \right)^n \\
    & \ge \frac{(\frac{n}{e})^n}{(\frac{2n}{k})^k e^k (\left \lfloor \frac{n}{k} \right \rfloor )^{k} (\left \lfloor \frac{n}{k} \right \rfloor \frac1e)^{\left \lfloor \frac{n}{k} \right \rfloor k}} \left (\frac{1}{k} \right)^n \\
    & \ge \frac{(\frac{n}{e})^n}{(\frac{2n}{k})^k e^k ( \frac{n}{k} )^{k} (\frac{n}{ke}  )^{  \frac{n}{k} k}} \left (\frac{1}{k} \right)^n \\
    &\ge \left (\frac{2ne}{k} \right )^{-2k}
\end{align*}
In the first inequality, we apply Stirling's approximation $(n/e)^n \le n! \le en (n/e)^n$. 

Fix a node $i \in N$, and fix a part $j \in [k]$. Let the neighbors of $i$ be the nodes $\{1, 2, \dots, \deg(i)\}$. Let $Y_t$ be the indicator variable which takes value $1$ if the node $t$ is in $\textbf{X}_j$. Then $u_i(\textbf{X}_j) = \sum_{t = 1}^{\deg(i)} Y_t$. Since all the $Y_t$ values are independent, we can use the Chernoff Bound (\Cref{thm:chernoff}) to show that
\begin{align*}
    \Pr\left [\left |u_i(\textbf{X}_j) - \frac{\deg(i)}{k} \right | \ge t(i) \right ] \le \frac1{2nk}\left (\frac{2ne}{k} \right )^{-2k}
\end{align*}
is true when $t(i) = 5\sqrt{\deg(i) \ln n}$ and $\deg(i) \ge 18k^2 \ln n$ (proof in Appendix \ref{apdx:approx-existence}). Using a union bound, the probability that there exists a node $i$ and part $j$ that violates the second condition of desirability is at most $\frac1{2}\left (\frac{2ne}{k} \right )^{-2k}$. 

In summary, the probability that the random partition satisfies the first condition of desirability is at least $\left (\frac{2ne}{k} \right )^{-2k}$, and the probability that the random partition violates the second condition of desirability is at most $\frac1{2}\left (\frac{2ne}{k} \right )^{-2k}$. With probability at least $\frac1{2}\left (\frac{2ne}{k} \right )^{-2k}$, $\textbf{X}$ satisfies both conditions of desirability, and therefore, a desirable partition must exist.
\end{proof}

\section{Efficient Computability via the Lovasz Local Lemma}\label{sec:lovasz-local-lemma}
In the previous section, we showed that a random partition is desirable with probability at least $\frac1{2}\left (\frac{2ne}{k} \right )^{-2k}$. Therefore, if $k$ is a constant, we can sample $2\left (\frac{2ne}{k} \right )^{2k}$ partitions, and with constant probability, at least one of them will be desirable. For large $k$, this may no longer be a practical algorithm. 

In this section, we show that relaxing the balanced condition to $\vare$-balanced allows us to compute these partitions efficiently (with high probability). We start by noting that random partitions are $\vare$-balanced with high probability when $\vare$ is a constant; this follows from a simple Chernoff bound. Therefore, the previous approach samples an $\vare$-balanced desirable partition with high probability. In this section, we show that using the Lovasz Local Lemma, we can improve the fairness guarantees of the output partition, while retaining computability. Specifically, we show that we can improve the envy-freeness guarantee from $O(\max\{\sqrt{\Delta}, k^2\} \ln n)$ to $O(\max\{\sqrt{\frac{\Delta}{k}}, k\} \ln \frac{\Delta k}{\vare})$.

Our proof follows the same format as the previous proof. We define an $\vare$-desirable partition as a partition which satisfies the following two conditions:
\begin{enumerate}[(i)]
    \item $X$ is $\vare$-balanced, and
    \item For all nodes $i$ with $\deg(i) \ge 12k \ln{\frac{\Delta k}{\vare}}$ and all parts $j \in [k]$,  
    \begin{align*}
        u_i(X_j) \in & \left [\frac{\deg(i)}{k} - 4\sqrt{\frac{\deg(i)}{k} \ln \frac{\Delta k}{\vare}}, \frac{\deg(i)}{k} + 4\sqrt{\frac{\deg(i)}{k} \ln \frac{\Delta k}{\vare}} \right ].
    \end{align*}
\end{enumerate}

To show computability, we first establish the existence of such partitions via the Lovasz local lemma. Then we apply known algorithmic versions of the lemma to design an efficient randomized algorithm. Finally, we establish the fairness properties of $\vare$-desirable partitions.

\begin{lemma}\label{lem:lovasz}(Lovasz Local Lemma \citep[Chapter 5]{alon2016probabilistic})
Let $A_1, A_2, \ldots, A_m$ be events in a probability space. Suppose that each event $A_i$ is mutually independent of all the other events except possibly some subset $\Gamma(A_i)$ of the events, and suppose that there exist real numbers $x_1, x_2, \ldots, x_m \in [0,1)$ such that for all $i \in [m]$
\begin{align*}
    \Pr[A_i] \le x_i \prod_{A_j \in \Gamma(A_i)} (1 - x_j) \quad \text{for all } i \in [m].
\end{align*}
Then,
\begin{align*}
    \Pr\left[ \bigwedge_{i=1}^m \overline{A_i} \right] > 0.
\end{align*}
\end{lemma}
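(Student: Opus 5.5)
The plan is to follow the standard inductive argument for the asymmetric local lemma. The key intermediate claim is a conditional bound. For every index $i$ and every set $S \subseteq [m]\setminus\{i\}$ with $\Pr[\bigwedge_{j\in S}\overline{A_j}]>0$,
\begin{align*}
\Pr\Big[A_i \,\Big|\, \bigwedge_{j\in S}\overline{A_j}\Big] \le x_i .
\end{align*}
Given this claim, the lemma follows from the chain rule:
\begin{align*}
\Pr\Big[\bigwedge_{i=1}^m \overline{A_i}\Big] = \prod_{i=1}^m \Pr\Big[\overline{A_i}\,\Big|\,\bigwedge_{j<i}\overline{A_j}\Big] \ge \prod_{i=1}^m (1-x_i) > 0,
\end{align*}
since every $x_i<1$. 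Positivity of each conditioning event is itself guaranteed inductively by the same product bound applied to prefixes.

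I prove the claim by strong induction on $|S|$. The base case $S=\emptyset$ is immediate: $\Pr[A_i]\le x_i\prod_{A_j\in\Gamma(A_i)}(1-x_j)\le x_i$. For the inductive step, split $S$ into $S_1 = \{j\in S : A_j\in\Gamma(A_i)\}$ and $S_2 = S\setminus S_1$. If $S_1=\emptyset$, then $A_i$ is mutually independent of $\{A_j : j \in S_2\}$, so the conditional probability equals $\Pr[A_i]\le x_i$. Otherwise, let $C_1 = \bigwedge_{j\in S_1}\overline{A_j}$ and $C_2=\bigwedge_{j\in S_2}\overline{A_j}$, and write
\begin{align*}
\Pr[A_i \mid C_1\wedge C_2] = \frac{\Pr[A_i\wedge C_1\mid C_2]}{\Pr[C_1\mid C_2]}.
\end{align*}

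For the numerator, drop $C_1$ and use mutual independence of $A_i$ from the events indexed by $S_2$:
\begin{align*}
\Pr[A_i\wedge C_1\mid C_2]\le \Pr[A_i\mid C_2]=\Pr[A_i]\le x_i\prod_{A_j\in\Gamma(A_i)}(1-x_j).
\end{align*}
For the denominator, enumerate $S_1=\{j_1,\dots,j_r\}$ and expand
\begin{align*}
\Pr[C_1\mid C_2]=\prod_{\ell=1}^r\Big(1-\Pr\Big[A_{j_\ell}\,\Big|\,\bigwedge_{s<\ell}\overline{A_{j_s}}\wedge C_2\Big]\Big).
\end{align*}
Each conditioning set here has size strictly less than $|S|$ and does not contain $j_\ell$, so the induction hypothesis bounds each inner probability by $x_{j_\ell}$. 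Hence $\Pr[C_1 \mid C_2] \ge \prod_{j\in S_1}(1-x_j)\ge\prod_{A_j\in\Gamma(A_i)}(1-x_j)$. Dividing numerator by denominator gives the bound $x_i$.

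The main obstacle is the bookkeeping in the inductive step. The hypothesis must be applied to strictly smaller sets, and every conditioning event must have positive probability. I handle this by strengthening the induction statement so that it also asserts $\Pr[\bigwedge_{j\in S}\overline{A_j}]\ge\prod_{j\in S}(1-x_j)>0$. This positivity follows from the same chain-rule expansion applied to strictly smaller sets, so it can be proved simultaneously within the same induction. The remaining subtlety is that "mutually independent of all events outside $\Gamma(A_i)$" is exactly what licenses $\Pr[A_i\mid C_2]=\Pr[A_i]$, since $C_2$ is an intersection of complements of such events.
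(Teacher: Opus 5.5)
Your proof is correct and is the standard inductive argument from Alon and Spencer (Chapter 5). The paper gives no proof of its own and simply cites that source, so your route is essentially the same. Strengthening the induction to also carry $\Pr[\bigwedge_{j\in S}\overline{A_j}]\ge\prod_{j\in S}(1-x_j)>0$ makes explicit the well-definedness of the conditional probabilities, which the textbook argument leaves implicit.
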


\begin{theorem}
$\vare$-Desirable partitions exist for all $\vare \in \left [6\sqrt{\frac{k\ln{k}}{n}}, 1 \right ]$. 
\end{theorem}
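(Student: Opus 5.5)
We will apply the Lovasz Local Lemma (\Cref{lem:lovasz}) to the uniformly random partition $\textbf{X}$, in which every node independently picks one of the $k$ parts uniformly at random. Set $L = \ln\frac{\Delta k}{\vare}$. There are two families of bad events.

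\textbf{Local events.} For each node $i$ with $\deg(i) \ge 12kL$ and each part $j \in [k]$, let $A_{i,j}$ be the event that
\[
\left|u_i(\textbf{X}_j) - \tfrac{\deg(i)}{k}\right| > 4\sqrt{\tfrac{\deg(i)}{k} L}.
\]
The random variable $u_i(\textbf{X}_j)$ is a sum of $\deg(i)$ independent Bernoulli$(1/k)$ indicators with mean $\mu = \deg(i)/k \ge 12L$. Take $\delta = 4\sqrt{L/\mu}$; this satisfies $\delta \le 4/\sqrt{12} < 1\cdot 1.2$, so a small adjustment of constants (or the $\delta\ge1$ form of Chernoff) handles the edge case. \Cref{thm:chernoff} then gives
\[
\Pr[A_{i,j}] \le 2e^{-16L/3} = 2\left(\tfrac{\vare}{\Delta k}\right)^{16/3}.
\]
The event $A_{i,j}$ depends only on the part choices of the neighbours of $i$. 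So $A_{i,j}$ is mutually independent of every local event $A_{i',j'}$ where $i'$ shares no neighbour with $i$. The number of dependent local events is therefore at most $k\Delta^2$.

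\textbf{Global events.} For each part $j$, let $B_j$ be the event that $|\textbf{X}_j|$ lies outside $[(1-\vare)n/k,(1+\vare)n/k]$. By Chernoff with $\mu = n/k$,
\[
\Pr[B_j] \le 2e^{-\vare^2 n/(3k)} \le 2e^{-12\ln k} = 2k^{-12},
\]
using $\vare \ge 6\sqrt{k\ln k/n}$. Each $B_j$ depends on all nodes, so it may be dependent on every event.

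This is the main obstacle: the global events are dependent on everything, so the symmetric LLL cannot be used. I would use the asymmetric form stated in \Cref{lem:lovasz}, with $x_{B} = 2/k$ (or a comparable value) for every $B_j$, and $x_A = (\vare/\Delta k)^{5}$ (or some small power) for every $A_{i,j}$.

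For a local event $A_{i,j}$ we need
\[
\Pr[A_{i,j}] \le x_A (1-x_A)^{k\Delta^2}(1-x_B)^{k}.
\]
Since $k\Delta^2 x_A \le 1$, the factor $(1-x_A)^{k\Delta^2}$ is at least a constant. However, $(1-2/k)^k \approx e^{-2}$, so with this choice the condition reduces to $2(\vare/\Delta k)^{16/3} \le c\,(\vare/\Delta k)^{5}$, which holds after tuning constants.

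The trouble lies with $B_j$, because it neighbours all local events. Their number can be as large as $nk$, giving a factor $(1-x_A)^{nk}$, which is tiny when $n \gg \Delta$. To fix this I would refine the global events. Partition the nodes into blocks of size $m \approx \Delta$-polynomial, or more naturally, replace $B_j$ by block-wise balance events: for each block of about $M = \Theta(k\vare^{-2}\ln(\Delta k/\vare))$ nodes and each part $j$, the block sends a $(1\pm\vare)/k$ fraction to part $j$. Summing over blocks then yields global $\vare$-balance. Each block event depends only on the local events involving its $M$ nodes, of which there are at most $Mk\Delta$. Its probability $2e^{-\vare^2 M/(3k)}$ can be made polynomially small in $\Delta k/\vare$, so the symmetric LLL condition $e\,p\,(D+1)\le 1$ should close with $D = \mathrm{poly}(\Delta,k,1/\vare)$.

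The remaining step is to handle the leftover block of size less than $M$ when $n$ is not divisible, and to verify that $n \ge M$. The latter follows from the lower bound on $\vare$ up to logarithmic factors. If it is too tight, I would fall back to treating the final partial block together with the global event via the asymmetric weights above. The final step is the constant bookkeeping in $e\,p\,(D+1) \le 1$.
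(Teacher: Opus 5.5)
There is a genuine gap: the block-wise replacement for the global balance event fails at the lower end of the admissible range, and the statement explicitly covers that range. At $\vare = 6\sqrt{k\ln k/n}$ one has $\vare^2 n/k = 36\ln k$. A block of size $M$ is only usable in the symmetric condition $e\,p\,(D+1)\le 1$ if its failure probability, which is $e^{-\Theta(\vare^2 M/k)}$, is below $1/(e(D+1))$. The node events alone already force $D\ge k\Delta^2$, so you need $\vare^2 M/k=\Omega(\ln(k\Delta^2))$. As soon as $\Delta$ exceeds a suitable fixed power of $k$, this forces $M>n$. For instance, take $k=2$, $n=10^6$, $\Delta=10^3$ and $\vare$ at the threshold. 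Node events exist there, since the degree cutoff $12k\ln\frac{\Delta k}{\vare}$ is about $300$. Your Chernoff estimate requires $\vare^2M/k\ge 3\ln(2ek\Delta^2)\approx 49$, while $\vare^2 n/k=36\ln 2\approx 25$. So ``$n\ge M$ up to logarithmic factors'' is exactly the factor you cannot afford, and there is not even one block. The balance constraint is then a single event whose probability is only polynomially small in $k$, independent of $\Delta$, and it shares variables with every node event; no symmetric LLL can absorb it. Your fallback therefore reduces to the asymmetric computation you had declared hopeless. For $\vare$ well inside the range the block construction is plausible, but that does not prove the statement.

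The missing idea is that the asymmetric computation does not fail. You lost it by weakening $\Pr[B_j]$ to $2k^{-12}$, which discards its dependence on $n$. The probability of imbalance is at most $2ke^{-\vare^2 n/(3k)}$, exponentially small in $n$ at rate $\vare^2/(3k)$. The product of $(1-x)$ over all node events is about $e^{-nx}$ for a per-node weight $x$. So the global inequality only needs $x$ to be a constant factor below $\vare^2/(3k)$. This is compatible with the node inequality, because the $\ln\frac{\Delta k}{\vare}$ in the threshold gives $\Pr[A_i]\le \vare^2/(8k\Delta^2)$.

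That is the paper's proof. It uses one global event with $x_G=\frac12$ and node events with $x_i=(k\Delta^2/\vare^2+1)^{-1}$, each depending on at most $\Delta^2$ other node events plus the global one. The decisive check is $2ke^{-\vare^2 n/(3k)}\le \frac12 e^{-n\vare^2/(k\Delta^2)}$. It holds because $\Delta^2\ge 4>3$ (the case $\Delta=1$ is trivial), and $\vare\ge 6\sqrt{k\ln k/n}$ absorbs the factor $4k$.

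Your first attempt was close. Replace $x_B=2/k$ by something like $1/(2k)$, since $2/k$ equals $1$ when $k=2$ and the lemma forbids that. Take $x_A$ of order $\vare^2/(k^2\Delta^2)$, so the local inequality holds with the Chernoff bound you actually have. Then the global inequality holds because $nk\,x_A$ stays a constant factor below $\vare^2 n/(3k)$; even your $x_A=(\vare/\Delta k)^5$ passes that test. For the $\delta>1$ edge case, apply Chernoff at the smaller deviation $\sqrt{12\mu L}$, where $\delta\le 1$, and use monotonicity, as the paper does.
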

\begin{proof}
We construct a random partition $\textbf{X}$ the same way as before: assign each node $i$ to a part chosen uniformly at random. To show that there exists an $\vare$-desirable partition, we instantiate the Lovasz Local Lemma with the following set of `bad' events:
\begin{enumerate}[(i)]
    \item We have a global event $A_G$ which occurs if the partition $\textbf{X}$ is not $\vare$-balanced.
    \item For each $i \in V$ with $\deg(i) \ge 12k\ln{\left (\frac{k\Delta}{\epsilon} \right )}$, we have a node event $A_i$ which occurs when for some $j \in [k]$, 
    \begin{align*}
        &u_i(\textbf{X}_j) \notin \left [\frac{\deg(i)}{k} - 4\sqrt{\frac{\deg(i)}{k} \ln \frac{\Delta k}{\vare}}, \frac{\deg(i)}{k} + 4\sqrt{\frac{\deg(i)}{k} \ln \frac{\Delta k}{\vare}}  \right ].
    \end{align*}
\end{enumerate}

We use the Lovasz Local Lemma to show that there is some nonzero probability that none of these events occur, and therefore some partition with the required properties. To this end, we first upper bound the number of events a node event can depend on.

\begin{lemma}
Each node event $A_i$ corresponding to some node $i$ depends on at most $\Delta^2$ other node events.
\end{lemma}
\begin{proof}
$A_i$ only uses the assigned parts of the vertices in the neighborhood of $i$. So $A_i$ only depends on other node events $A_{i'}$ where the neighborhoods of $i'$ and $i$ intersect. This only happens when $i'$ is at distance either $1$ or $2$ from $i$. There are at most $\Delta^2$ such nodes $i'$. 
\end{proof}

Note that while node events only depend on $\Delta^2$ other node events, the global event depends on all the node events. So we cannot use the (simpler) symmetric Lovasz local lemma \citep[Chapter 5]{alon2016probabilistic}. Our assignment for the Lovasz Local Lemma is as follows:
\begin{align*}
    x_i = \frac{1}{\frac{k\Delta^2}{\vare^2} + 1} \text{ $\forall i \in V$ } &&\text{and}&& x_G = \frac{1}{2}.
\end{align*}

We show that this assignment satisfies the inequalities required by the lemma, starting with the global event. This event potentially depends on all the node events. 
Therefore, we need to show that
\begin{align*}
    \Pr[A_G] \le x_G \prod_{i \in V} (1-x_i)
\end{align*}

Using a Chernoff bound combined with a union bound, we can show that 
\begin{align*}
    \Pr[A_G] \le 2ke^{\frac{-\vare^2n}{3k}}.
\end{align*}
The complete proof can be found in Appendix \ref{apdx:lovasz-local-lemma}. On the other side of the inequality, we have
\begin{align*}
    x_G \prod_{i \in V} (1- x_i) = \frac{1}{2} \left (1 -  \frac{1}{\frac{k\Delta^2}{\vare^2} + 1} \right )^n \ge \frac{1}{2} e^{\frac{-n\vare^2}{k\Delta^2}}.
\end{align*}

This follows from the fact that $\left (1 - \frac1{z+1} \right) = \frac{z}{z+1} = \frac{1}{1 + \frac1z} \ge \frac1{e^{1/z}} = e^{-1/z}$. It suffices to show that $2ke^{\frac{-\vare^2n}{3k}} \le \frac{1}{2} e^{\frac{-n\vare^2}{k\Delta^2}}$ to prove that the Lovasz Local Lemma inequality is satisfied. Assume for contradiction that this does not hold. We can then deduce (assuming $\Delta \ge 2$)\footnote{When $\Delta = 1$, any $\vare$-balanced partition is trivially $\vare$-desirable.}
\begin{align*}
    & 2ke^{\frac{-\vare^2n}{3k}} > \frac{1}{2} e^{\frac{-n\vare^2}{k\Delta^2}} 
    \implies  4k > e^{\frac{\vare^2n}{3k} \left (1 - \frac{3}{\Delta^2}\right )} \\
    &\implies 4k > e^{\frac{\vare^2n}{12k}} 
    \implies \vare < 6\sqrt{\frac{k\ln{k}}{n}},
\end{align*}
which contradicts our condition on $\vare$. This proves the inequality for the global event. 

We proceed next with the inequality corresponding to the node event. We need to show that for some node $i$,
\begin{align*}
    \Pr[A_i] \le x_i (1-x_G) \prod_{A_j \text{ depends on } A_i} (1- x_j),
\end{align*}
The right hand side of the inequality reduces to the following,
\begin{align*}
    x_i (1-x_G) &\prod_{A_j \text{ depends on } A_i} (1- x_j) \\
    &\ge \frac{1}{\frac{k\Delta^2}{\vare^2} + 1} \left (1 -  \frac{1}{\frac{k\Delta^2}{\vare^2} + 1} \right )^{\Delta^2} \left ( \frac{1}{2} \right ) \\
    &\ge \frac{\vare^2}{4k\Delta^2} e^{\frac{-\vare^2}{k}} \ge  \frac{\vare^2}{8k\Delta^2}.
\end{align*}

Using a Chernoff bound (proof in Appendix \ref{apdx:lovasz-local-lemma}) combined with a union bound, we can upper bound the left hand side as
\begin{align*}
    \Pr[A_i] \le \frac{\vare^2}{8k\Delta^2} \le x_i (1-x_G) \prod_{A_j \text{ depends on } A_i} (1- x_j),
\end{align*}
which shows that the inequality corresponding to the node event is satisfied. Since we picked $i$ arbitrarily, this argument shows that that the inequalities corresponding to all node events are satisfied.

In conclusion, via the Lovasz Local Lemma, the probability that the random partition is $\vare$-desirable is non-zero, which implies the existence of at least one $\vare$-desirable partition.
\end{proof}

\subsection{Computing the partition}
The algorithm we use is described in Algorithm \ref{alg:epsilon-desirable}. We start with an initial partition where the part of each agent is chosen uniformly at random. If the global event occurs, we sample a new partition via the same procedure. If a node event occurs (corresponding to node $i$), we reallocate all of node $i$'s neighbors to a new part chosen uniformly at random and independently for each node.

\begin{algorithm}[t]
\caption{Finding $\vare$-Desirable Partitions}\label{alg:epsilon-desirable}

Sample independently an initial partition $X$ that assigns a part for every $i\in V$ uniformly at random\;
\While{some bad event occurs (either global or node)}{
    \If{the global event is true}{
        Find a new partition $X$ chosen uniformly at random\;
    }
    \If{a node event is true for some node $i$}{
        Find a new partition assignment of the node $i$ and all of its neighbors by choosing independently for each of these nodes a new part uniformly at random\;
    }
}
\Return{$X$}\;
\end{algorithm}

The correctness of this approach follows immediately from Theorem 1.2 of \cite{MoserTardos2010}. 

\begin{theorem}
Algorithm \ref{alg:epsilon-desirable} terminates in expected number of iterations $\sum_{v \in V} \frac{x_v}{1- x_v} + \frac{x_G}{1- x_G} = O(n)$, and always outputs an $\vare$-desirable partition.
\end{theorem}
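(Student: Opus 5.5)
The plan is to cast Algorithm~\ref{alg:epsilon-desirable} as an instance of the Moser--Tardos resampling framework and invoke Theorem~1.2 of \cite{MoserTardos2010}. That theorem requires three ingredients. First, a finite set of mutually independent random variables. Second, bad events that are each determined by a subset $\mathrm{vbl}(A)$ of these variables. Third, an assignment $x_A\in[0,1)$ satisfying $\Pr[A]\le x_A\prod_{B\in\Gamma(A)}(1-x_B)$, where $\Gamma(A)$ is the set of events $B\neq A$ with $\mathrm{vbl}(A)\cap\mathrm{vbl}(B)\neq\emptyset$. Its conclusion is that the resampling algorithm, which picks any occurring event and resamples only its variables, terminates after at most $\sum_A \frac{x_A}{1-x_A}$ expected resampling steps. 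It also outputs an assignment in which no bad event occurs.

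The variables are the part labels $\{\sigma_v\}_{v\in V}$, each uniform on $[k]$ and mutually independent. For the global event $A_G$ we set $\mathrm{vbl}(A_G)=V$, so resampling it means drawing a completely new uniform partition, exactly as in the first branch of the algorithm. For a node event $A_i$, the event depends only on the labels of $i$'s neighbours. To match the algorithm, we take $\mathrm{vbl}(A_i)=N(i)\cup\{i\}$. Including $i$ itself is harmless: it enlarges the variable set without changing the event, and the dependency count still stays within the distance-$\le 2$ ball. So $|\Gamma(A_i)\setminus\{A_G\}|\le\Delta^2+\Delta$. Here I would either note that the preceding existence proof only needs a count of $O(\Delta^2)$, or redo the one-line estimate $\bigl(1-x_i\bigr)^{\Delta^2+\Delta}\ge e^{-2\vare^2/k}$, which still gives a constant factor and is absorbed by the slack in the Chernoff bound from Appendix~\ref{apdx:lovasz-local-lemma}. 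This is the one place where care is needed. If the slack is too tight, the fallback is to resample only $N(i)$, since that is all $A_i$ depends on, and to read the algorithm's step that way. The global event shares variables with every node event, so $\Gamma(A_G)$ consists of all node events and $A_G\in\Gamma(A_i)$ for every $i$. This is exactly the dependency structure used in the proof of the preceding theorem. Hence the inequalities verified there, with $x_i=\bigl(\frac{k\Delta^2}{\vare^2}+1\bigr)^{-1}$ and $x_G=\frac12$, are precisely the Moser--Tardos hypotheses.

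Applying the theorem bounds the expected number of resampling steps by $\sum_{v}\frac{x_v}{1-x_v}+\frac{x_G}{1-x_G}$. Each term satisfies $\frac{x_v}{1-x_v}=\frac{\vare^2}{k\Delta^2}\le 1$ and $\frac{x_G}{1-x_G}=1$, so the sum is at most $n+1=O(n)$. The main obstacle is the bookkeeping around the algorithm's fixed order of checks. When both the global event and a node event hold, the algorithm performs two resamplings in one iteration. This is fine, because Moser--Tardos allows an arbitrary (even adversarial) selection rule, so an iteration accounts for at most two steps and the $O(n)$ bound survives. Correctness is immediate: the loop exits only when no bad event occurs, so the output is $\vare$-balanced and satisfies condition (ii). The output is therefore $\vare$-desirable.
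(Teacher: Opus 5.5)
Your proposal is correct and matches the paper, which simply invokes Theorem~1.2 of \cite{MoserTardos2010} with the values $x_i$ and $x_G$ already verified in the preceding existence proof. Two small remarks. First, including $i$ in the resampled set costs nothing and no slack argument is needed: the closed neighborhoods of $i$ and $i'$ meet only when $i'$ is within distance $2$ of $i$, and there are at most $\Delta+\Delta(\Delta-1)=\Delta^2$ such $i'$, which is exactly the paper's count. Second, for the iteration bound the relevant direction is that every loop iteration performs at least one resampling, so the number of iterations is at most the number of resampling steps; the fact that an iteration performs at most two resamplings bounds steps by iterations, which is the wrong direction.
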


Using the Markov inequality, for any $\delta > 0$, the above result states that the algorithm terminates in $O(n/\delta)$ iterations with probability at least $1-\delta$.

\subsection{Fairness Properties of $\vare$-Desirable Partitions}

We prove the following Theorem analogous to \Cref{thm:desirable-properties}. We note that these properties are stronger since they are independent of $n$. Therefore, for graphs with low degree, we get significant improvements in both the envy-freeness and core guarantees.

\begin{restatable}{theorem}{thmfairnessforepsdesirable}
$\vare$-Desirable partitions are $\left (12 \max\{\sqrt{\frac{\Delta}{k}}, k\}\ln \frac{\Delta k}{\vare}\right )$-envy-free and in the \\$(k + \sqrt{k}, 16k^{3/2} \ln \frac{\Delta k}{\vare})$-core.
\end{restatable}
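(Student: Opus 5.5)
The plan mirrors the proof of \Cref{thm:desirable-properties}: we use only condition (ii) of $\vare$-desirability together with the trivial bound $u_i(\cdot) \le \deg(i)$. Write $L = \ln \frac{\Delta k}{\vare}$. Since $\vare \le 1$ and we may assume $k \ge 2$, $\Delta \ge 2$ (the case $\Delta = 1$ is trivial, as noted earlier), we have $\Delta k/\vare \ge 4 > e$, so $L \ge 1$. This will let us replace $\sqrt{L}$ by $L$ where convenient.

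\emph{Envy-freeness.} We split on the degree of a node $i$.
\begin{itemize}
\item If $\deg(i) < 12kL$, then $u_i(X_j) - u_i(X) \le \deg(i) < 12kL$ for every part $j$, so $i$ is trivially EF-$12kL$.
\item Otherwise condition (ii) applies to $i$. Its own part and every other part $X_j$ both lie in the window around $\deg(i)/k$ of half-width $4\sqrt{\frac{\deg(i)}{k}L}$. Hence the envy is at most
\[
8\sqrt{\tfrac{\deg(i)}{k}L} \;\le\; 8\sqrt{\tfrac{\Delta}{k}}\,\sqrt{L} \;\le\; 8\sqrt{\tfrac{\Delta}{k}}\,L ,
\]
using $L \ge 1$.
\end{itemize}
Both bounds are at most $12\max\{\sqrt{\Delta/k}, k\}L$.

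\emph{Core.} Suppose for contradiction that $S$ is a $(k+\sqrt{k}, 16k^{3/2}L)$-blocking set, and fix $i \in S$. Since $\deg(i) \ge u_i(S) > 16k^{3/2}L \ge 12kL$, condition (ii) applies to $i$. Writing $d = \deg(i)$, we get
\[
u_i(S) > (k+\sqrt{k})\Bigl(\tfrac{d}{k} - 4\sqrt{\tfrac{d}{k}L}\Bigr) + 16k^{3/2}L = d + \tfrac{d}{\sqrt{k}} - 4(\sqrt{k}+1)\sqrt{dL} + 16k^{3/2}L .
\]
Bounding $\sqrt{k}+1 \le 2\sqrt{k}$, the last three terms are at least
\[
\tfrac{d}{\sqrt{k}} - 8\sqrt{k}\sqrt{dL} + 16k^{3/2}L .
\]
This is the perfect square $\bigl(d^{1/2}k^{-1/4} - 4k^{3/4}\sqrt{L}\bigr)^2 \ge 0$. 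Therefore $u_i(S) > d \ge u_i(S)$, a contradiction. As in the earlier theorem, the size constraint on $S$ is never used.

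The only delicate step is matching constants so that the quadratic in $\sqrt{d}$ is a nonnegative perfect square. This needs the coefficient $16k^{3/2}$ together with the bound $(k+\sqrt{k})/\sqrt{k} \le 2\sqrt{k}$. A secondary check is that the additive term $16k^{3/2}L$ exceeds the threshold $12kL$, so that every member of a blocking set falls under condition (ii).
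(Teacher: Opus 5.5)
Your proof is correct and follows the paper's own argument. For the core, it is the same chain of inequalities: a blocking node has degree above $16k^{3/2}L \ge 12kL$, you then use the bound $\sqrt{k}+1 \le 2\sqrt{k}$, and you show that $\frac{d}{\sqrt{k}} - 8\sqrt{dkL} + 16k^{3/2}L$ is nonnegative, which you make explicit as a perfect square. For envy-freeness, your degree split, together with $L \ge 1$ to get $\sqrt{L}\le L$, just spells out the step the paper leaves as ``follows from the second condition of $\vare$-desirability.''
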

\begin{proof}
The envy-freeness guarantee follows from the second condition of $\vare$-desirability. To prove that the partition is approximately in the core, we use the fact that the utility of each node is upper bounded by their degree in the graph, exactly like Theorem \ref{thm:desirable-properties}. Assume for contradiction that there exists a $(k + \sqrt{k},16k^{3/2} \ln \frac{\Delta k}{\vare})$-blocking set $S$ to an $\vare$-desirable $k$-partition $X$. Consider some node $i \in S$. Note that $u_i(S) > (k+ \sqrt{k})u_i(X) + 16k^{3/2} \ln \frac{\Delta k}{\vare}$. For this to be true, node $i$ must have degree at least $16k^{3/2} \ln \frac{\Delta k}{\vare} > 12k \ln \frac{\Delta k}{\vare}$. Therefore, node $i$ satisfies the second condition of $\vare$-desirability. We establish a contradiction using the following sequence of inequalities.
\begin{align*}
    u_i(S) &> (k+ \sqrt{k})u_i(X) + 16k^{3/2} \ln \frac{\Delta k}{\vare} \\
    &\ge (k+ \sqrt{k}) \left [\frac{\deg(i)}{k} - 4\sqrt{\frac{\deg(i)}{k} \ln \frac{\Delta k}{\vare}} \right ] + 16k^{3/2} \ln \frac{\Delta k}{\vare} \\
    &\ge \deg(i) + \frac{\deg(i)}{\sqrt{k}} - 8\sqrt{\deg(i)k \ln \frac{\Delta k}{\vare}} + 16k^{3/2} \ln \frac{\Delta k}{\vare} \\
    &\ge \deg(i) \ge u_i(S).
\end{align*}
In the second inequality we used the second condition of $\vare$-desirable partitions, and in the third inequality we used the fact that the quadratic expression (in $\sqrt{\deg(i)}$), $\frac{\deg(i)}{\sqrt{k}} - 8\sqrt{\deg(i)k\ln \frac{\Delta k}{\vare}} + 16k^{3/2} \ln \frac{\Delta k}{\vare}$ is non-negative. This creates a contradiction, which implies that the blocking set $S$ cannot exist.
\end{proof}

\section{Approximate Core when $k = 2$}\label{sec:two-k}
In this section, we make progress towards resolving Open Questions 1 and 2 of \cite{Li2023GraphPartitioning}.

Let $X = (X_1, X_2)$ denote the balanced partition which minimizes the cut size $E(X_1, X_2)$. We show that $X$ is in the $(\phi + o(1), 0)$-core, where $\phi = (1+\sqrt{5})/2 \approx 1.618$. This improves on \cite{Li2023GraphPartitioning} who show that the same partition is in the $(2, 0)$ core. Throughout this section, we assume $n \ge 7$. It can be easily verified that the $(1, 0)$-core is always non-empty when $n \le 6$.
We start with the following lemma. 
\begin{lemma}\label{lem:1-swap}
Let $S$ be any set of nodes of size at least $\left \lfloor \frac{n}{2} \right \rfloor$. Then the following holds:
\begin{align*}
    \max\{E(X_1 \cap S, X_1), E(X_2 \cap S, X_2)\} \ge \left (\frac{n-6}{n-2} \right )E(X_1 \cap S, X_2 \cap S)
\end{align*}
\end{lemma}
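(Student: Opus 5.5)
The plan is a local-exchange argument that uses only the fact that $X$ minimizes the cut among balanced partitions. Write $A = X_1 \cap S$ and $B = X_2 \cap S$. If $A$ or $B$ is empty, then $E(A,B)=0$ and the inequality is trivial, so assume both are nonempty.

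First, for any $a \in A$ and $b \in B$, consider swapping $a$ and $b$, so that $a$ moves to $X_2$ and $b$ moves to $X_1$. This keeps both part sizes unchanged, so the result is still balanced and its cut is at least $E(X_1,X_2)$. Counting the edges incident to $a$ and $b$, the change in cut size is
\begin{align*}
-E(a,X_2) + E(a,X_1) - E(b,X_1) + E(b,X_2) + 2\cdot\mathbbm{1}[(a,b)\in E].
\end{align*}
The correction term appears because an edge $ab$ is cut both before and after the swap, yet it is subtracted twice in the other terms. Minimality therefore gives
\begin{align*}
E(a,X_2) + E(b,X_1) \le E(a,X_1) + E(b,X_2) + 2\cdot\mathbbm{1}[(a,b)\in E].
\end{align*}

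Second, I sum this inequality over all pairs $(a,b) \in A \times B$. This yields
\begin{align*}
|B|\,E(A,X_2) + |A|\,E(B,X_1) \le |B|\,E(A,X_1) + |A|\,E(B,X_2) + 2E(A,B).
\end{align*}
Since $B \subseteq X_2$ and $A \subseteq X_1$, we have $E(A,X_2) \ge E(A,B)$ and $E(B,X_1) \ge E(A,B)$. The left side is therefore at least $(|A|+|B|)E(A,B)$. The right side is at most $(|A|+|B|)M + 2E(A,B)$, where $M = \max\{E(A,X_1), E(B,X_2)\}$. Rearranging gives
\begin{align*}
M \ge \frac{|S|-2}{|S|}\, E(A,B).
\end{align*}

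Finally, I compare the constants. The map $s \mapsto (s-2)/s$ is increasing, and $|S| \ge \lfloor n/2 \rfloor \ge (n-1)/2$. Hence $(|S|-2)/|S| \ge (n-5)/(n-1)$, and this is at least $(n-6)/(n-2)$ because $(n-5)(n-2) - (n-6)(n-1) = 4 > 0$. Here $n \ge 7$ keeps every quantity positive.

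The step requiring the most care is the swap accounting, in particular the $+2$ correction for an edge between $a$ and $b$. After summation this correction becomes the $2E(A,B)$ term, and that term is exactly what produces the $(|S|-2)/|S|$ loss. The rest is averaging.
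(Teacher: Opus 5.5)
Your proof is correct and follows the paper's argument essentially step for step. Both use the single-swap inequality from cut-minimality, with the $+2E(a,b)$ correction, then sum over $(X_1\cap S)\times(X_2\cap S)$ and average against the maximum to get the $(|S|-2)/|S|$ factor. The only differences are cosmetic: you discard the edges to $X_2\setminus S$ and $X_1\setminus S$ after summing rather than before, and you use $\lfloor n/2\rfloor \ge (n-1)/2$ and then relax $(n-5)/(n-1)$ to $(n-6)/(n-2)$, whereas the paper uses $\lfloor n/2\rfloor \ge n/2-1$ directly.
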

\begin{proof}
Pick an arbitrary $i \in X_1$ and an arbitrary $j \in X_2$. 
Consider the balanced partition $(X_1 - i + j, X_2 - j + i)$. 
\begin{align*}
    E(X_1 - i + j, X_2 - j + i) - E(X_1, X_2) &= E(i, X_1) + E(j, X_2) - E(i, X_2) - E(j, X_1) + 2E(i, j) \\
    &\le E(i, X_1) + E(j, X_2)  - E(i, X_2 \cap S) - E(j, X_1 \cap S) + 2E(i, j) 
\end{align*}

The left hand side is at least $0$ from the definition of $(X_1, X_2)$ as the balanced partition that minimizes $E(X_1, X_2)$. Therefore, the following holds:
\begin{align*}
    E(i, X_1) + E(j, X_2) \ge E(i, X_2 \cap S) + E(j, X_1 \cap S)  - 2E(i, j).
\end{align*} 

Summing over all $i \in X_1 \cap S$ and $j \in X_2 \cap S$ in the above inequality,
\begin{align*}
   |X_1 \cap S|E(X_2 \cap S, X_2)& + |X_2 \cap S| E(X_1 \cap S, X_1) \ge (|S| - 2)E(X_1 \cap S, X_2 \cap S).
\end{align*}

Note that $|X_1 \cap S|$ and $|X_2 \cap S|$ sum up to $|S|$ since $X$ is a partition. Therefore, 
\begin{align*}
    \max\{E(X_1 \cap S, X_1), E(X_2 \cap S, X_2)\} &\ge \frac{|X_1 \cap S|}{|S|}E(X_2 \cap S, X_2) + \frac{|X_2 \cap S|}{|S|} E(X_1 \cap S, X_1) \\
   &\ge \frac{|S| - 2}{|S|}E(X_1 \cap S, X_2 \cap S) \\
   &\ge \left (\frac{n-6}{n-2} \right )E(X_1 \cap S, X_2 \cap S).
\end{align*}
In the last inequality we used the lower-bound $|S| \ge \left \lfloor \frac{n}{2} \right \rfloor \ge \frac{n}2 - 1$.
\end{proof}

\begin{theorem}\label{thm:phi-core}
When $k = 2$ and $n \ge 7$, the partition $X$ that minimizes the cut size subject to being balanced is in the $\left (\phi(\frac{n-2}{n-6}), 0 \right )$-core.
\end{theorem}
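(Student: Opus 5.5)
The plan is to argue by contradiction. Suppose $S$ is a $(\phi c, 0)$-blocking set of size $\lfloor n/2\rfloor$ or $\lceil n/2\rceil$, where $c = \frac{n-2}{n-6}$. I would aggregate the per-node blocking inequalities into a few inequalities between edge counts, and then combine them with \Cref{lem:1-swap} and the minimality of the cut $E(X_1,X_2)$ until they become incompatible exactly when $\alpha \ge \phi c$.

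\textbf{Notation and aggregation.} Write $S_1 = S\cap X_1$ and $S_2 = S\cap X_2$, and set
\[
a_1 = E(S_1, X_1), \quad a_2 = E(S_2,X_2), \quad b = E(S_1,S_2), \quad s_1 = E(S_1,S_1), \quad s_2=E(S_2,S_2).
\]
For $i\in S_1$ we have $u_i(X)=E(i,X_1)$ and $u_i(S)=E(i,S_1)+E(i,S_2)$. Summing the blocking inequality over $S_1$, and similarly over $S_2$, gives
\[
s_1 + b > \alpha a_1 \quad\text{and}\quad s_2 + b > \alpha a_2 .
\]
We also have the trivial bounds $s_1\le a_1$ and $s_2 \le a_2$. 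In addition, each node individually satisfies $E(i,S_2) > (\alpha-1)E(i,X_1)$ for $i \in S_1$, and symmetrically on the other side. From \Cref{lem:1-swap}, without loss of generality $a_1 \ge b/c$.

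\textbf{A second inequality from cut-minimality.} These inequalities alone only yield contradictions of the form $\alpha < c+1$, which is too weak. So I would add a second comparison coming from minimality of the cut. The partition $(S, V\setminus S)$ is itself balanced, so $E(S, V\setminus S) \ge E(X_1,X_2)$. Expanding both cuts in terms of $S_1, S_2, X_1\setminus S_1, X_2\setminus S_2$ gives an inequality relating $b$ and the edges leaving $S$. Those edges are in turn controlled by $a_1 - s_1$ and $a_2 - s_2$ together with the cross terms. I expect this comparison to give a bound like $s_1 \le$ (something) $\cdot (b + \ldots)$ that complements $s_1 + b > \alpha a_1 \ge \alpha b/c$. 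If the global cut comparison is too lossy, I would instead use a single-swap argument: exchange one node of $S_1$ with one node of $X_2 \setminus S_2$, averaged as in the proof of \Cref{lem:1-swap}.

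\textbf{Main obstacle: closing the argument.} Normalize $b=1$ and treat $s_1, a_1$ (and the analogous quantities on side $2$) as variables in a small linear system. I would then look for the value of $\alpha$ at which the system stops being feasible. The target is that the two constraints combine into something of the form $x > (\alpha/c - 1)$ and $x < 1/(\alpha/c)$, where $x$ is the relevant ratio. Then $\alpha/c\,(\alpha/c - 1) < 1$, which forces $\alpha/c < \phi$ and contradicts $\alpha = \phi c$; the golden ratio enters as the root of $t^2 - t - 1 = 0$. The difficult step is identifying which second inequality produces the reciprocal relation. I am not certain the global cut $(S,V\setminus S)$ is the right one, and finding it is where I expect most of the work to lie. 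Degenerate cases, such as $b = 0$, are easy: then $s_1 > \alpha a_1 \ge \alpha s_1$ is impossible.
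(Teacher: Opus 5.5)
You have the same ingredients as the paper: the swap lemma to fix the side with $a_1 \ge b/c$, the summed blocking inequalities, and a comparison of $X$ with the balanced cut $(S,V\setminus S)$. But the proposal stops before the step that actually produces $\phi$. You never combine these inequalities, and you leave open whether $(S,V\setminus S)$ is the right comparison. As written this is a plan rather than a proof, and the missing step is all the content beyond the easy bound $\alpha < c+1$.

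The global cut is the right choice, and no single-swap variant is needed. The observation you are missing is that in $E(X_1,X_2)-E(S,V\setminus S)$ the cross edges $E(S_1,X_2\setminus S)$ and $E(S_2,X_1\setminus S)$ occur in both cuts and cancel, so they never need to be controlled. Write $e_1 = a_1 - s_1 = E(S_1,X_1\setminus S)$ and $e_2 = a_2 - s_2 = E(S_2,X_2\setminus S)$. Minimality of $X$ then gives
\[
0 \ge E(X_1,X_2) - E(S,V\setminus S) = b + E(X_1\setminus S, X_2\setminus S) - e_1 - e_2 \ge b - e_1 - e_2 .
\]

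It therefore suffices to show $e_1+e_2<b$. Your aggregated inequalities already give this with $\alpha=\phi c$ and $c\ge 1$. (The strict summed inequalities need $S_1,S_2\neq\emptyset$; if $S$ lies inside one part it cannot block.)

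\emph{Other side.} From $s_2+b>\phi c(s_2+e_2)$ we get $\phi c\,e_2 < b-(\phi c-1)s_2 \le b$, so $e_2<b/\phi$.

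\emph{WLOG side.} From $s_1+b>\phi c\,a_1\ge \phi b$ we get $s_1>(\phi-1)b$. Feeding this back into the same inequality gives $\phi c\,e_1 < b-(\phi c-1)s_1 < b-(\phi-1)^2 b=(\phi-1)b$. Hence $e_1<b/\phi^2$, using $(\phi-1)^2=2-\phi$ and $(\phi-1)/\phi=1/\phi^2$.

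Adding the two bounds, $e_1+e_2<b(1/\phi^2+1/\phi)=b$, which is the contradiction.

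So $\phi$ enters as the value at which the two unequal per-side bounds have weights summing to $1$. One bound is strengthened by the swap lemma and the other is not. This is exactly the paper's argument, which derives $E(S_1,S_1)>\frac{\phi-1}{2-\phi}E(S_1,X_1\setminus S)$ and then uses $\max\{x,y\}\ge(2-\phi)x+y/\phi$.
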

\begin{proof}
Assume for contradiction that there exists a $(\phi(\frac{n-2}{n-6}), 0)$-blocking set $S$. Using Lemma \ref{lem:1-swap}, either $E(X_1 \cap S, X_1) \ge \left (\frac{n-6}{n-2}\right )E(X_1 \cap S, X_2 \cap S)$ or $E(X_2 \cap S, X_2) \ge \left (\frac{n-6}{n-2}\right ) E(X_1 \cap S, X_2 \cap S)$. Assume without loss of generality, 
\begin{align}
E(X_1 \cap S, X_1) \ge \left (\frac{n-6}{n-2}\right ) E(X_1 \cap S, X_2 \cap S). \label{eq:x1-cap-s1-lowerbound} 
\end{align}
We lowerbound the right hand side of the above inequality using the definition of a blocking set.
\begin{claim}\label{claim:x1caps}
\begin{align*}
    \left (\frac{n-6}{n-2} \right )& E(X_1 \cap S, X_2 \cap S) > \phi E(X_1 \cap S, X_1 \setminus S) +\left (\phi  -1 \right )E(X_1 \cap S, X_1 \cap S).
\end{align*}
\end{claim}
\begin{proof}
For any node $i \in X_1 \cap S$, from the definition of a $(\phi(\frac{n-2}{n-6}), 0)$-blocking set $S$, we have the following guarantee: 
\begin{align*}
    \left (\frac{n-6}{n-2}\right )u_i(S) > \left (\frac{n-6}{n-2}\right )\phi \left (\frac{n-2}{n-6}\right )  u_i(X_1) = \phi u_i(X_1).
\end{align*}

We can rewrite $u_i(S)$ in terms of the sets $S \cap X_1$ and $S \setminus X_1$. Since $(X_1, X_2)$ is a partition of the set of nodes, we have $S \setminus X_1 = X_2 \cap S$. Therefore,
$u_i(S) = u_i(S \cap X_1) + u_i(S \cap X_2) =E(i, S \cap X_1) + E(i, S \cap X_2)$.

Similarly, we can rewrite $u_i(X_1)$ as $u_i(X_1) = u_i(S \cap X_1) + u_i(X_1 \setminus S) =E(i, S \cap X_1) + E(i, X_1 \setminus S)$. Using these two expansions, we can rewrite our original inequality as:
\begin{align*}
\left (\frac{n-6}{n-2}\right ) &E(i, X_2 \cap S) > (\phi-1) E(i, X_1 \cap S) + \phi E(i, X_1 \setminus S).
\end{align*}

This inequality holds for all $i \in X_1 \cap S$. Summing over all $i \in X_1 \cap S$ proves the claim.
\end{proof}

We can similarly show that:
\begin{claim}\label{claim:x2caps}
\begin{align*}
    \left (\frac{n-6}{n-2} \right ) E(X_1 \cap S, X_2 \cap S) > \phi E(X_2 \cap S, X_2 \setminus S) + \left (\phi  -1 \right )E(X_2 \cap S, X_2 \cap S).
\end{align*}
\end{claim}

Plugging Claim \ref{claim:x1caps} into \eqref{eq:x1-cap-s1-lowerbound}:
\begin{align*}
    E(X_1 \cap S, X_1 \setminus S) + E(X_1 \cap S, X_1 \cap S) &\ge \left ( \frac{n-6}{n-2}\right )E(X_1 \cap S, X_2 \cap S) \\
    &>  \phi E(X_1 \cap S, X_1 \setminus S) + \left (\phi -1 \right )E(X_1 \cap S, X_1 \cap S).
\end{align*}

Re-arranging terms gives us
\begin{align}
    E(X_1 \cap S, X_1 \cap S) > \frac{\phi -1}{2-\phi} E(X_1 \cap S, X_1 \setminus S). \label{eq:connection}
\end{align}

Consider the partition $(S, V \setminus S)$. We will show that $E(S, V \setminus S)$ is strictly less than $E(X_1, X_2)$ and reach a contradiction.

\begin{align*}
  E(&X_1, X_2) - E(S, V \setminus S) \\[1em]
  &= E(X_1 \cap S, X_2 \cap S) + E(X_1 \setminus S, X_2 \setminus S) - E(X_1 \cap S, X_1 \setminus S) - E(X_2 \cap S, X_2 \setminus S) \\[1em]
  &\ge E(X_1\cap S, X_2 \cap S)  - E(X_1 \cap S, X_1 \setminus S) - E(X_2 \cap S, X_2 \setminus S) \\[1em]
  &> \max\{\phi E(X_1 \cap S, X_1 \setminus S) + (\phi - 1)E(X_1 \cap S, X_1 \cap S), \phi E(X_2 \cap S, X_2 \setminus S) + (\phi - 1)E(X_2 \cap S, X_2 \cap S)\} \\& \qquad \qquad - E(X_1 \cap S, X_1 \setminus S) - E(X_2 \cap S, X_2 \setminus S) 
\end{align*}

In the final inequality, we use both Claims \ref{claim:x1caps} and \ref{claim:x2caps}. We can simplify this further using \eqref{eq:connection}.
\begin{align*}
E(&X_1, X_2) - E(S, V \setminus S) \\[1em]
  &> \max \Biggl \{\left (\phi + \frac{(\phi - 1)^2}{2 - \phi} \right ) E(X_1 \cap S, X_1 \setminus S), \phi E(X_2 \cap S, X_2 \setminus S) \Biggr \} \\&\qquad \qquad - E(X_1 \cap S, X_1 \setminus S) - E(X_2 \cap S, X_2 \setminus S) \\[1em]
  &=\max \left \{\frac{1}{2 - \phi} E(X_1 \cap S, X_1 \setminus S),   \phi E(X_2 \cap S, X_2 \setminus S) \right \} \\&\qquad \qquad  - E(X_1 \cap S, X_1 \setminus S) - E(X_2 \cap S, X_2 \setminus S)
\end{align*}

We lower bound the max term by taking a weighted sum of the two terms inside it. Specifically, note that $\max\{x, y\} \ge (2-\phi)x + \frac{y}{\phi}$ since $\frac{1}{\phi} + 2 - \phi = 1$, when $\phi = \frac{1 + \sqrt{5}}{2}$. Using this, 
\begin{align*}
E(&X_1, X_2) - E(S, V \setminus S) \\[1em]
  &> \frac{2-\phi}{2 - \phi} E(X_1 \cap S, X_1 \setminus S) + \frac{\phi}{\phi} E(X_2 \cap S, X_2 \setminus S) - E(X_1 \cap S, X_1 \setminus S) - E(X_2 \cap S, X_2 \setminus S) \\[1em]
    &= 0.
\end{align*}
This contradicts our choice of $X$ as a balanced min-cut partition. Therefore the blocking set $S$ cannot exist.
\end{proof}

\subsection{Efficient Computability}
Finding the balanced partition which minimizes the cut size is famously an NP-hard problem \citep{garey1976simplified}. In this section, we show that a local search procedure to find the mincut results in a balanced partition in the $(2 + o(1), 0)$-core. 

\begin{restatable}{theorem}{thmtwokpolytime}\label{thm:two-k-poly-time}
A partition in the $(\frac{2n-4}{n-6}, 0)$-core can be computed in polynomial time when $n \ge 7$.
\end{restatable}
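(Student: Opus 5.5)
The plan is to run a swap-based local search for the balanced min-cut and to observe that the proof of Lemma~\ref{lem:1-swap} uses only local optimality. Concretely, the proof of Lemma~\ref{lem:1-swap} needs one fact about $X$: for every $i \in X_1$ and $j \in X_2$, swapping $i$ and $j$ does not decrease the cut. So I will compute a balanced partition $X=(X_1,X_2)$ that is \emph{swap-optimal}, meaning no single swap of $i \in X_1$ with $j \in X_2$ strictly decreases $E(X_1,X_2)$. The algorithm starts from an arbitrary balanced partition and, while some improving swap exists, performs it. Each iteration checks $O(n^2)$ pairs, and each change in cut value is computed in polynomial time. Every performed swap lowers the integer-valued cut by at least $1$, and the cut starts at most $|E| \le n^2$, so there are at most $n^2$ iterations. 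Balancedness is preserved throughout.

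Next, Lemma~\ref{lem:1-swap} holds verbatim for this swap-optimal $X$. Write $c = \frac{n-6}{n-2}$ and note that $\frac{2n-4}{n-6} = \frac{2}{c}$. Suppose for contradiction that $S$ is a $(2/c,0)$-blocking set of size $\lfloor n/2\rfloor$ or $\lceil n/2\rceil$. By the lemma, without loss of generality
\[
E(X_1\cap S, X_1) \ge c\,E(X_1\cap S, X_2\cap S).
\]
Suppose $X_1 \cap S = \emptyset$. Then $S \subseteq X_2$, so every $i\in S$ has $u_i(S)\le u_i(X)$. This contradicts blocking, since $2/c \ge 1$ and $S \neq \emptyset$. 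So I may assume $X_1\cap S\neq\emptyset$.

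Then I sum the blocking condition over $i \in X_1\cap S$, multiplied by $c$, as in Claim~\ref{claim:x1caps}:
\[
c\,E(X_1\cap S, X_1\cap S) + c\,E(X_1\cap S, X_2\cap S) > 2\,E(X_1\cap S, X_1).
\]
The two terms on the left are bounded separately. The first is at most $c\,E(X_1\cap S, X_1) \le E(X_1\cap S,X_1)$, since $X_1\cap S\subseteq X_1$. The second is at most $E(X_1\cap S,X_1)$ by the inequality from the lemma. Hence the left side is at most $2E(X_1\cap S,X_1)$, contradicting the strict inequality. The strictness comes from the blocking definition and survives summation because $X_1\cap S$ is nonempty. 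So no blocking set exists.

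I expect no deep obstacle here: this is a simpler version of Theorem~\ref{thm:phi-core}, which avoids the final comparison with the cut $(S, V\setminus S)$ and so needs only swap-optimality. The step requiring care is confirming that Lemma~\ref{lem:1-swap} really uses nothing beyond swap-optimality. This includes checking the $2E(i,j)$ correction term for a swapped pair that is adjacent, and the summation over $X_1\cap S$ and $X_2\cap S$. The other point to check is that the size bound $|S|\ge \lfloor n/2\rfloor \ge \frac n2 - 1$ used there holds for blocking sets of either allowed size.
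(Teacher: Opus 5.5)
Your proposal is correct and follows essentially the same route as the paper: compute a balanced partition that is locally optimal under single swaps, invoke Lemma~\ref{lem:1-swap} (which indeed uses only single-swap optimality), and sum the blocking inequality over $X_1\cap S$ to contradict the lemma's bound. Your term-by-term bound $c\,E(X_1\cap S,X_1\cap S)+c\,E(X_1\cap S,X_2\cap S)\le 2E(X_1\cap S,X_1)$ is just a repackaging of the paper's rearrangement. Two of your additions tighten the argument slightly: the explicit iteration count for the local search, and the separate treatment of $X_1\cap S=\emptyset$, where the paper's summation would run over an empty set and give no strict inequality.
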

\begin{proof}
Let $(X_1, X_2)$ be a balanced partition such that swapping a node in $X_1$ with a node in $X_2$ does not reduce the size of the cut $E(X_1, X_2)$. Such a balanced partition can be computed in polynomial time. This is exactly equivalent to the local min-cut procedure used by \cite{Li2023GraphPartitioning}. 

Assume for contradiction that there is a $(\frac{2n - 4}{n-6}, 0)$-blocking set $S$. We can still apply Lemma \ref{lem:1-swap} since the proof only used single node swaps. Therefore, we can assume 
\begin{align}
    E(X_1 \cap S, X_1) \ge \left ( \frac{n-6}{n-2} \right ) E(X_1 \cap S, X_2 \cap S). \label{eq:two-k-poly}
\end{align} 

Again, similar to Theorem \ref{thm:phi-core}, we lower bound the right hand side of the above inequality using the definition of a blocking set. For any node $i \in X_1 \cap S$, 
\begin{align*}
\left ( \frac{n-6}{n-2} \right ) u_i(S) > \left ( \frac{n-6}{n-2} \right ) \left ( \frac{2n-4}{n-6} \right )u_i(X_1) = 2u_i(X_1).
\end{align*}

Summing over this inequality for all $i \in X_1 \cap S$:
\begin{align*}
    \left ( \frac{n-6}{n-2} \right ) &(E(X_1 \cap S, X_1 \cap S) + E(X_1 \cap S, X_2 \cap S)) > 2(E(X_1 \cap S, X_1 \cap S) + E(X_1 \cap S, X_1 \setminus S)).
\end{align*}

Re-arranging the terms gives us:
\begin{align*}
   \left ( \frac{n-6}{n-2} \right ) E(X_1 \cap S, X_2 \cap S) &> 2E(X_1 \cap S, X_1 \setminus S) + E(X_1 \cap S, X_1 \cap S) \\&\ge E(X_1 \cap S, X_1).
\end{align*}
This contradicts \eqref{eq:two-k-poly}. Therefore the blocking set $S$ cannot exist.
\end{proof}

The approximation guarantee can be slightly strengthened by using brute force when $n$ is small. This result (almost) resolves Open Question 2 from \cite{Li2023GraphPartitioning}.

\begin{corr}
For any constant $\vare > 0$, there is a polynomial time algorithm that outputs a balanced partition in the $(2 + \vare, 0)$-core. 
\end{corr}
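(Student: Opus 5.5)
The plan is to split on $n$ according to whether $\frac{2n-4}{n-6} \le 2+\vare$. A short calculation gives
\[
\frac{2n-4}{n-6} = 2 + \frac{8}{n-6},
\]
so this holds as soon as $n \ge n_0 := 6 + \lceil 8/\vare \rceil$. Note that $n_0$ is a constant depending only on $\vare$.

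\textbf{Large $n$.} When $n \ge n_0$ (which in particular gives $n \ge 7$), we invoke Theorem \ref{thm:two-k-poly-time}. It produces in polynomial time a balanced partition in the $(\frac{2n-4}{n-6}, 0)$-core. Since $\frac{2n-4}{n-6} \le 2+\vare$, any $(2+\vare,0)$-blocking set is also a $(\frac{2n-4}{n-6},0)$-blocking set. This is because $u_i(X) \ge 0$, so $(2+\vare)u_i(X) \ge \frac{2n-4}{n-6}u_i(X)$. Hence the output partition is in the $(2+\vare,0)$-core.

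\textbf{Small $n$.} When $n < n_0$, the graph has constant size, and we use brute force. Enumerate all balanced $2$-partitions; there are at most $2^{n_0}$ of them. For each one, enumerate all candidate sets $S$ of size $\lfloor n/2\rfloor$ or $\lceil n/2\rceil$, and check the blocking condition with $(\alpha,\beta)=(2,0)$. Output any partition with no $(2,0)$-blocking set. Such a partition exists by the result of \cite{Li2023GraphPartitioning} that the balanced min-cut partition is in the $(2,0)$-core. Alternatively, Theorem \ref{thm:phi-core} for $n \ge 7$ together with the remark that the $(1,0)$-core is nonempty for $n \le 6$ also suffices. A partition in the $(2,0)$-core is a fortiori in the $(2+\vare,0)$-core, by the same monotonicity argument as in the large case. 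The brute force costs $O(4^{n_0}\,\mathrm{poly}(n_0))$, which is a constant for fixed $\vare$.

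The only step needing care is existence in the small case. I would rely on the balanced min-cut partition being in the $(2,0)$-core (or $(\phi(\frac{n-2}{n-6}),0)$ combined with the $n\le 6$ remark), and make sure the approximation factor invoked for small $n$ is at most $2+\vare$. If only Theorem \ref{thm:phi-core} were used, its factor $\phi\frac{n-2}{n-6}$ can exceed $2+\vare$ for small $n$. For that reason I would cite the $(2,0)$ existence result of \cite{Li2023GraphPartitioning} directly.
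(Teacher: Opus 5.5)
Your proof is correct and is essentially the paper's argument: split on whether $\frac{8}{n-6} \le \vare$, apply Theorem \ref{thm:two-k-poly-time} plus monotonicity for large $n$, and for the constantly many small $n$ brute-force a $(2,0)$-core partition, whose existence comes from \cite{Li2023GraphPartitioning}. The only blemish is your ``alternatively'' sentence, which you yourself retract at the end; it does fail, since $\phi\frac{n-2}{n-6}$ can exceed $2+\vare$ for small $n$ (e.g.\ it equals $5\phi \approx 8.09$ at $n=7$), so drop it.
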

\begin{proof}
When $\frac{8}{n-6} \ge \vare$, then $n \le \frac8{\vare} + 6$, which is a constant. In this case, we can use brute force to find a partition in the $(2, 0)$-core in ``polynomial'' time. We know that such a partition exists due to \cite{Li2023GraphPartitioning}.

When $\frac{8}{n-6} < \vare$, then $n > \frac{8}{\vare} + 6$. We can therefore use Theorem \ref{thm:two-k-poly-time} to efficiently compute a partition in the $(2 + \vare, 0)$-core.
\end{proof}

\section{Conclusion}
We study the problem of finding fair and balanced graph partitions. Our main result shows that both envy-freeness and the core can be simultaneously achieved, albeit approximately. We show that these partitions can be computed efficiently if we allow for partitions to be $\vare$-balanced instead of exactly balanced. We also show that when $k = 2$, a $(2 + \vare, 0)$-core can be computed in polynomial time, almost resolving Open Question \ref{oq:two} from \cite{Li2023GraphPartitioning}. 

The main question we leave for future work is the computation of desirable partitions (Theorem \ref{thm:desirable-properties}) without relaxing the balancedness constraint. While our result shows the existence of these partitions, how to compute them in polynomial time still remains open.



 \section*{Acknowledgments}
The author would like to thank Rik Sengupta for useful discussions.
The author is funded by the National Science Foundation (NSF) Career Award 2441296 and Grant RI-2327057.


\bibliographystyle{alpha}
\bibliography{abb,references}

\newpage
\onecolumn
\appendix

\section{Missing Proofs from Section \ref{sec:approx-existence}}\label{apdx:approx-existence}

\begin{theorem}
Let $\textbf{S} \subseteq N$ be a randomly generated set of nodes generated by adding each node independently with probability $1/k$. For some node $i$, let the neighbors of $i$ be the nodes $\{1, 2, \dots, \deg(i)\}$. Let $Y_t$ be the indicator variable which takes value $1$ if the node $t$ is in $\textbf{S}$; then $u_i(\textbf{S}) = \sum_{t = 1}^{\deg(i)} Y_t$. Then if $\deg(i) \ge 18k^2 \ln{n}$,
\begin{align*}
    \Pr\left [\left |u_i(\textbf{S}) - \frac{\deg(i)}{k} \right | \ge 5\sqrt{\deg(i) \ln{n}} \right ] \le \frac1{2nk}\left (\frac{2ne}{k} \right )^{-2k}
\end{align*}
\end{theorem}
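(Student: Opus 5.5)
We apply the two-sided Chernoff bound (\Cref{thm:chernoff}) directly to $u_i(\textbf{S}) = \sum_{t=1}^{\deg(i)} Y_t$. The $Y_t$ are independent Bernoulli variables with parameter $1/k$, so $\mu = \deg(i)/k$. We choose $\delta$ so that $\delta\mu = 5\sqrt{\deg(i)\ln n}$, namely
\[
\delta = \frac{5k\sqrt{\ln n}}{\sqrt{\deg(i)}} .
\]

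The first step is to check that $\delta < 1$, which is what \Cref{thm:chernoff} requires. Since $\deg(i) \ge 18k^2\ln n$, we get $\delta \le 5/\sqrt{18} < 1.2$. This is slightly too large: it fails the hypothesis. I see two ways to handle this.

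\begin{itemize}
\item \emph{Monotonicity.} The probability of deviating by at least $5\sqrt{\deg(i)\ln n}$ is at most the probability of deviating by at least $\delta'\mu$ for any $\delta' \le \delta$. We could therefore cap $\delta'$ just below $1$, or at $\sqrt{18}\cdot 5/\sqrt{18}\cdot$ whatever constant is admissible. But then the exponent only gives $\delta'^2\mu/3$ with $\mu \ge 18k\ln n$, i.e.\ roughly $6k\ln n$.
\item \emph{Direct computation.} When $\delta \le 1$ we have
\[
\frac{\delta^2\mu}{3} = \frac{25k^2\ln n}{\deg(i)}\cdot\frac{\deg(i)}{3k} = \frac{25k\ln n}{3}.
\]
The bound is then $2n^{-25k/3}$, independent of $\deg(i)$. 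In the regime $\delta > 1$ the upper deviation is at least $\mu$ and the lower tail is vacuous (since $u_i \ge 0 > \mu - \delta\mu$). For the upper tail, I would use the standard form $\Pr[X \ge (1+\delta)\mu] \le e^{-\delta\mu/3}$ for $\delta \ge 1$, or simply monotonicity with $\delta' = 1$. That yields an exponent at least $\mu/3 \ge 6k\ln n$.
\end{itemize}

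In either case the probability is at most $2n^{-6k}$.

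The final step is to compare $2n^{-6k}$ with the target $\frac{1}{2nk}\left(\frac{2ne}{k}\right)^{-2k}$. The target is at least $\frac{1}{2nk}(2en)^{-2k}$. So it suffices to show
\[
4nk\,(2e)^{2k}\, n^{2k} \le n^{6k},
\]
i.e.\ $4k(2e)^{2k} \le n^{4k-1}$. This holds for $n$ larger than a modest constant, since $k \le n$. The small-$n$ cases are where care is needed. If $\deg(i) \ge 18k^2\ln n$ and $\deg(i) \le n$, then $n \ge 18k^2\ln n$, which forces $n$ to be reasonably large (e.g.\ $n \ge 100$ or so). That should be enough to absorb the constant $(2e)^{2k}$.

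I expect the main obstacle to be purely bookkeeping. It consists of handling the $\delta \ge 1$ boundary of \Cref{thm:chernoff} cleanly, and verifying the final numeric inequality using the implied lower bound on $n$ coming from $18k^2\ln n \le \deg(i) \le n$.
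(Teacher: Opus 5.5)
Your proposal is correct and follows the paper's route: a Chernoff bound giving $2n^{-6k}$, then a comparison with $\frac{1}{2nk}(\frac{2ne}{k})^{-2k}$. The paper avoids your $\delta>1$ case split by applying the bound to the smaller deviation $\sqrt{18\deg(i)\ln n}$, i.e.\ $\delta = k\sqrt{18\ln n/\deg(i)} \le 1$, which gives exponent exactly $6k\ln n$, and then uses monotonicity since $\sqrt{18}<5$; its final comparison is an elementary chain that tacitly needs $k\ge 2$ and $n\ge e^2$, whereas your lower bound on $n$ from $18k^2\ln n\le \deg(i)\le n-1$ is, if anything, more careful.
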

\begin{proof}
We start by lower bounding the (a slightly modified) right hand side of the above inequality:
\begin{align}
    \ln{\left (\frac{1}{4nk}\left (\frac{2ne}{k} \right )^{-2k} \right )} &\ge \ln{(2ne)^{-2k - 2}} \notag\\
    & \ge (-2k - 2)\ln{2ne} \notag\\
    &\ge (-3k)\left [\ln{n} + \ln{2e} \right ] \notag\\
    &\ge (-3k)\left[\ln n + 2 \right] \notag\\
    &\ge -6k \ln n. \label{eq:chernoff-1}
\end{align}

We apply Theorem \ref{thm:chernoff} with $\delta = k\sqrt{\frac{18\ln n}{\deg(i)}}$. Note that $\deg(i) \ge 18k^2\ln n$ implies that $\delta \le 1$. The Chernoff bound (with $\mu = \deg(i)/k$) gives us
\begin{align*}
    \Pr\left [\left |u_i(\textbf{S}) - \frac{\deg(i)}{k} \right | \ge \sqrt{18\deg(i) \ln n} \right ] \le 2e^{-\frac{\delta^2\mu}{3}} \le 2e^{-6k\ln n} \le \frac1{2nk}\left (\frac{2ne}{k} \right )^{-2k}.
\end{align*}
The last inequality uses \eqref{eq:chernoff-1}. The claim then follows from noting that
\begin{align*}
    \Pr\left [\left |u_i(\textbf{S}) - \frac{\deg(i)}{k} \right | \ge 5\sqrt{\deg(i) \ln n} \right ] \le \Pr\left [\left |u_i(\textbf{S}) - \frac{\deg(i)}{k} \right | \ge \sqrt{18\deg(i) \ln n} \right ]. && \qedhere
\end{align*}
\end{proof}

\section{Missing Proofs from Section \ref{sec:lovasz-local-lemma}}\label{apdx:lovasz-local-lemma}

\begin{theorem}
Let $\textbf{X} = (\textbf{X}_1, \dots, \textbf{X}_k)$ be a random partition where each node is added to a part chosen independently and uniformly at random. Then, 
\begin{align*}
    \Pr\left [\exists j \in [k] \text{ s.t. } \, \left | |X_j| - \frac{n}{k} \right | >  \frac{\vare n}{k} \right ] \le 2ke^{\frac{-\vare^2 n}{3k}}
\end{align*}
\end{theorem}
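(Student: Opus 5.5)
For each fixed part $j\in[k]$, the size $|\textbf{X}_j|$ is a sum of $n$ independent Bernoulli indicators. The indicator for node $t$ equals $1$ exactly when $t$ is assigned to part $j$, which happens with probability $1/k$. Hence $\mu=\mathbb{E}|\textbf{X}_j| = n/k$. The plan is to apply the two-sided Chernoff bound (\Cref{thm:chernoff}) to each part separately and then take a union bound over the $k$ parts.

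Concretely, for a fixed $j$, set $\delta=\vare$. When $\vare<1$, \Cref{thm:chernoff} applies directly and gives
\begin{align*}
\Pr\left[\left||\textbf{X}_j|-\frac{n}{k}\right|\ge \frac{\vare n}{k}\right]\le 2\exp\left(-\frac{\vare^2 n}{3k}\right).
\end{align*}
The event in the statement uses the strict inequality $>\frac{\vare n}{k}$. Its probability is therefore bounded by the same quantity.

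The only delicate point is the boundary case $\vare=1$, since \Cref{thm:chernoff} is stated only for $0<\delta<1$. I would handle it by monotonicity. The event $\left||\textbf{X}_j|-\frac nk\right|>\frac nk$ is contained in the event $\left||\textbf{X}_j|-\frac nk\right|\ge \delta'\frac nk$ for every $\delta'<1$. Applying the bound for each such $\delta'$ and letting $\delta'\to1$ gives $2e^{-n/(3k)}$ by continuity. Alternatively, one can observe that the lower tail $|\textbf{X}_j|<0$ is impossible, and the upper tail $|\textbf{X}_j|>2n/k$ is covered by the standard multiplicative bound, which holds for all $\delta\le 1$.

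Finally, a union bound over the $k$ parts multiplies the per-part bound by $k$, giving $2k e^{-\vare^2 n/(3k)}$, as claimed. I expect no real obstacle beyond the $\vare=1$ boundary case, which the limiting argument above handles.
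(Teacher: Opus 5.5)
Your proposal is correct and matches the paper's proof. For each part, $|\textbf{X}_j|$ is a sum of $n$ independent Bernoulli$(1/k)$ indicators with mean $n/k$, so the two-sided Chernoff bound with $\delta=\vare$ applies, and a union bound over the $k$ parts finishes the argument. Your limiting argument for the boundary case $\vare=1$ is extra care the paper omits, and the case $\vare=0$ holds trivially because the bound $2k$ is at least $1$.
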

\begin{proof}
We have the following simple Chernoff bound:
\begin{claim}
Let $\textbf{S} \subseteq N$ be a randomly generated set of nodes generated by adding each node independently with probability $1/k$. Then for any $\vare \in [0, 1]$,
\begin{align*}
    \Pr\left [\left | |S| - \frac{n}{k} \right | >  \frac{\vare n}{k} \right ] \le 2e^{\frac{-\vare^2 n}{3k}}
\end{align*}
\end{claim}
The theorem follows by applying the Chernoff bound to each $X_j$ and then combining them using a union bound.
\end{proof}

\begin{theorem}
Let $\textbf{S} \subseteq N$ be a randomly generated set of nodes generated by adding each node independently with probability $1/k$. For some node $i$, let the neighbors of $i$ be the nodes $\{1, 2, \dots, \deg(i)\}$. Let $Y_t$ be the indicator variable which takes value $1$ if the node $t$ is in $\textbf{S}$; then $u_i(\textbf{S}) = \sum_{t = 1}^{\deg(i)} Y_t$. Then if $\deg(i) \ge 12k \ln{\frac{\Delta k}{\epsilon}}$,
\begin{align*}
    \Pr\left [\left |u_i(\textbf{S}) - \frac{\deg(i)}{k} \right | \ge 4\sqrt{\frac{\deg(i)}{k} \ln{\frac{\Delta k}{\epsilon}}} \right ] \le \frac{\vare^2}{8k^2\Delta^2}
\end{align*}
\end{theorem}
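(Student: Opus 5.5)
This is a direct application of the two-sided Chernoff bound (\Cref{thm:chernoff}), structured like the analogous appendix result for \Cref{sec:approx-existence}. Write $L = \ln\frac{\Delta k}{\vare}$ and $\mu = \mathbb{E}[u_i(\textbf{S})] = \deg(i)/k$. The $Y_t$ are independent Bernoulli$(1/k)$ variables, so \Cref{thm:chernoff} applies to $u_i(\textbf{S}) = \sum_t Y_t$.

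I would choose $\delta$ so that $\delta\mu = 4\sqrt{\mu L}$, namely
\[
\delta = 4\sqrt{\frac{L}{\mu}} = 4\sqrt{\frac{kL}{\deg(i)}}.
\]
First I check that $\delta \le 1$; strictly, \Cref{thm:chernoff} requires $\delta<1$. Since $\deg(i) \ge 12kL$, we get $\delta \le 4/\sqrt{12} \approx 1.15$, so this threshold does not suffice on its own. This is the step I expect to be the main obstacle.

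My first attempt would be to reuse the trick from the earlier appendix proof. There, one proves the bound for a smaller deviation and then notes monotonicity: the event $|u_i - \mu| \ge 4\sqrt{\mu L}$ is contained in the event $|u_i - \mu| \ge c\sqrt{\mu L}$ for any $c \le 4$. So I would take $c = \sqrt{12}$, giving $\delta = \sqrt{12kL/\deg(i)} \le 1$. (If exact $\delta = 1$ is a concern, either use the version of Chernoff valid for $\delta \le 1$, or handle the $\delta > 1$ case with the standard upper tail $e^{-\delta\mu/3}$ together with the trivial lower tail.) With $\delta = \sqrt{12kL/\deg(i)}$, the Chernoff bound gives
\[
2\exp\left(-\frac{\delta^2\mu}{3}\right) = 2\exp(-4L) = 2\left(\frac{\vare}{\Delta k}\right)^4.
\]

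The remaining step is routine: show $2(\vare/(\Delta k))^4 \le \vare^2/(8k^2\Delta^2)$. This is equivalent to $16\vare^2 \le \Delta^2 k^2$. It holds whenever $\Delta k \ge 4$ (using $\vare \le 1$), and the small cases are trivial: for $\Delta = 1$ the earlier footnote applies, and for small $k$ one can check directly or slightly enlarge the constant. If the constants turn out too tight, I would instead apply Chernoff at the full $c=4$ threshold in the regime $\deg(i) \ge 16kL$, where $\delta \le 1$, getting $2e^{-16L/3}$ with ample slack. For $12kL \le \deg(i) < 16kL$ I would use the $\sqrt{12}$ threshold as above. Either way the tail bound is $e^{-\Omega(L)}$ with exponent comfortably above $2$, which absorbs the factor $8$ and the extra $k$ in the target.
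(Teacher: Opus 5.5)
Your proof is correct and follows the paper's argument: the paper also applies \Cref{thm:chernoff} with $\delta = \sqrt{\frac{12k}{\deg(i)}\ln\frac{\Delta k}{\vare}}$ to get the bound $2e^{-4\ln\frac{\Delta k}{\vare}}$, compares it with $\frac{\vare^2}{8k^2\Delta^2}$, and then uses the containment of the $4\sqrt{\cdot}$ deviation event in the $\sqrt{12}\sqrt{\cdot}$ one. Your explicit observation that the comparison requires $\Delta k/\vare \ge 4$, together with your remark about the boundary case $\delta = 1$, spells out what the paper uses silently in its step $-2\bigl[\ln\frac{k\Delta}{\vare} + \ln 4\bigr] \ge -4\ln\frac{k\Delta}{\vare}$.
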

\begin{proof}
We start by lower bounding the (slightly modified) right hand side of the above inequality:
\begin{align}
    \ln{\left (\frac{\vare^2}{16k^2\Delta^2} \right )} &= (-2)\ln{\left (\frac{4k\Delta}{\vare} \right )} \notag\\
    &= (-2)\left [\ln{\frac{k\Delta}{\vare}} + \ln{4} \right ] \notag\\
    &\ge -4\ln{\frac{k\Delta}{\vare}} \label{eq:chernoff-2}
\end{align}

We apply Theorem \ref{thm:chernoff} with $\delta =\sqrt{\frac{12k}{\deg(i)}\ln{\frac{\Delta k}{\epsilon}}}$. Note that $\deg(i) \ge 12k\ln{\frac{\Delta k}{\epsilon}}$ implies that $\delta \le 1$. The Chernoff bound (with $\mu = \deg(i)/k$) gives us
\begin{align*}
    \Pr\left [\left |u_i(\textbf{S}) - \frac{\deg(i)}{k} \right | \ge \sqrt{\frac{12\deg(i)}{k}\ln{\frac{\Delta k}{\epsilon}}} \right ] \le 2e^{-\frac{\delta^2\mu}{3}} \le 2e^{-4\ln{\frac{k\Delta}{\vare}}} \le \frac{\vare^2}{8k^2\Delta^2}.
\end{align*}
The last inequality uses \eqref{eq:chernoff-2}. The claim then follows from noting that
\begin{align*}
    \Pr\left [\left |u_i(\textbf{S}) - \frac{\deg(i)}{k} \right | \ge 4\sqrt{\frac{\deg(i)}{k}\ln{\frac{\Delta k}{\epsilon}}} \right ] \le \Pr\left [\left |u_i(\textbf{S}) - \frac{\deg(i)}{k} \right | \ge \sqrt{\frac{12\deg(i)}{k}\ln{\frac{\Delta k}{\epsilon}}} \right ]. && \qedhere
\end{align*}
\end{proof}

\end{document}